\tikzset{small/.style={font=\fontsize{9}{9}\selectfont}}
\tikzset{vsmall/.style={font=\fontsize{7}{7}\selectfont}}
\definecolor{gray}{rgb}{.5,.5,.5}
\newcommand{\xa}{\alpha}
\newcommand{\om}{\omega}
\newcommand{\Om}{\Omega}
\newcommand{\R}{{\mathbb R}}
\newcommand{\maps}{\colon}
\def\stackto #1 { \, {\stackrel{#1}{\longrightarrow}}\, }
\def\stackTo #1 { {\stackrel{#1}{\Longrightarrow}} }
\newcommand{\dt}{\underline{t}}
\newcommand{\dalpha}{\underline{\alpha}}
\newcommand{\inv}{{\rm inv}}
\renewcommand{\O}{{\rm O}}
\renewcommand{\o}{\mathfrak{o}}
\newcommand{\IO}{{\rm IO}}
\newcommand{\ISO}{{\rm ISO}}
\newcommand{\io}{\mathfrak{io}}
\newcommand{\Aut}{{\rm Aut}}
\newcommand{\aut}{\mathfrak{aut}}
\newcommand{\g}{\mathfrak{g}}
\newcommand{\h}{\mathfrak{h}}
\newcommand{\Ad}{{\rm Ad}} 
\newcommand{\E}{\mathbf{E}}
\newcommand{\F}{\mathbf{F}}
\newcommand{\X}{\mathbf{X}}
\newcommand{\Y}{\mathbf{Y}}
\renewcommand{\P}{\mathbf{P}}
\newcommand{\G}{\mathbf{G}}
\newcommand{\Po}{\mathbf{Poinc}}
\newcommand{\Tel}{\mathbf{Tel}}  
\newcommand{\T}{\mathbf{T}}  
\newcommand{\fake}{{\cal T}}  
\newcommand{\ff}{{\cal F}}  
\newcommand{\I}{{\bf I}}  
\renewcommand{\P}{\mathbf{P}}
\newcommand{\D}{\nabla} 
\newcommand{\LC}{\widetilde\nabla} 
\newcommand{\fc}{\omega} 
\newcommand{\tfm}{{\bf 2F\!}M}  
\newcommand{\tff}{{\bf 2}\ff}  
\newcommand{\we}{\wedge}
\newtheorem{thm}{Theorem}    
\newtheorem{cor}[thm]{Corollary}
\newtheorem*{unnumbered-thm}{Theorem}
\newtheorem{defn}[thm]{Definition}
\newtheorem{example}[thm]{Example}
\newtheorem{prop}[thm]{Proposition}
\newtheorem{lemma}[thm]{Lemma}
\renewenvironment{proof}{\noindent
\textbf{Proof.}}{\hfill\rule{.6em}{.7em} \medskip}
\newcommand{\define}[1]{{\bf \boldmath{#1}}}
\newcommand{\tr}{{\mathrm{tr}}} 
\newcommand{\hepth}[1]{\href{http://arxiv.org/abs/hep-th/#1}{arXiv:hep-th/#1}}
\newcommand{\physics}[1]{\href{http://arxiv.org/abs/physics/#1}{arXiv:physics/#1}}
\newcommand{\grqc}[1]{\href{http://arxiv.org/abs/gr-qc/#1}{arXiv:gr-qc/#1}}
\newcommand{\Math}[2]{\href{http://arxiv.org/abs/math.#1/#2}{arXiv:math.#1/#2}}
\newcommand{\MMath}[1]{\href{http://arxiv.org/abs/math/#1}{arXiv:math.#1}}
\newcommand{\arxiv}[1]{\href{http://arxiv.org/abs/#1/}{arXiv:#1}}
\newcommand{\webpage}[1]{{\color{blue}}\href{#1}{#1}{\color{blue}}}
\begin{document}

\title{\bf Teleparallel Gravity as a Higher Gauge Theory}

\author{\bf John C.\ Baez\\[.5em]
{\sl \small Department of Mathematics} \\[-.3em]
{\sl \small  University of California}\\[-.3em]
{\sl \small Riverside, California 92521, USA} \\
\small and \\
{\sl \small Centre for Quantum Technologies}  \\[-.3em]
{\sl \small National University of Singapore} \\[-.3em]
{\sl \small Singapore 117543}  \\
\small  \texttt{baez@math.ucr.edu} 
 \and
{\bf Derek K.\ \!Wise} \\[.5em]
{\sl \small Department of Mathematics} \\[-.3em]
{\sl \small University of Erlangen--N\"urnberg} \\[-.3em]
{\sl \small Cauerstr.~11, 91058 Erlangen, Germany} \\
\small \texttt{derek.wise@fau.de} 
}
\date{July 17, 2014}

\maketitle

\begin{abstract}
\noindent
We show that general relativity can be viewed as a higher gauge theory
involving a categorical group, or 2-group, called the teleparallel
2-group.  On any semi-Riemannian manifold $M$, we first construct a
principal 2-bundle with the Poincar\'e 2-group as its structure
2-group.  Any flat metric-preserving connection on $M$ gives a flat
2-connection on this 2-bundle, and the key ingredient of this
2-connection is the torsion.  Conversely, every flat strict
2-connection on this 2-bundle arises in this way if $M$ is simply
connected and has vanishing 2nd deRham cohomology.  Extending from the
Poincar\'e 2-group to the teleparallel 2-group, a 2-connection
includes an additional piece: a coframe field.  Taking advantage of
the teleparallel reformulation of general relativity, which uses a
coframe field, a flat connection and its torsion, this lets us rewrite 
general relativity as a theory with a 2-connection for the teleparallel 
2-group as its only field.
\end{abstract}


\section{Introduction}
\label{introduction}

This paper was prompted by two puzzles in higher gauge theory.  Higher
gauge theory is the generalization of gauge theory where instead of a
connection defining parallel transport for point particles, we have a
`2-connection' defining parallel transport for particles and strings,
or an `$n$-connection' for higher $n$ defining parallel transport for
extended objects whose worldvolumes can have dimensions up to and
including $n$.  While the mathematics of this subject is increasingly
well-developed \cite{BH,BS,SSS,SW}, its potential applications to
physics remain less developed.

One puzzle concerns the Poincar\'e 2-group.  Just as ordinary gauge
theory involves choosing a Lie group, higher gauge theory involves a
Lie $n$-group. Many examples of 2-groups are known \cite{BL} and one
of the simplest is the Poincar\'e 2-group.  A 2-group can also be seen
as a `crossed module', which is a pair of groups connected by a
homomorphism, say $t \maps H \to G$, together with an action of $G$ on
$H$ obeying two equations.  For the Poincar\'e 2-group we take $G$ to
be Lorentz group, $H$ to be the translation group of Minkowski
spacetime, $t$ to be trivial, and use the usual action of Lorentz
transformations on the translation group.  The data involved here are
the same that appear in the usual construction of the Poincar\'e group
as a semidirect product.  However, the \emph{physical meaning} of the
Poincar\'e 2-group has until now remained obscure.

A `spin foam model' can be seen as a way to quantize a gauge theory or
higher gauge theory by discretizing spacetime and rewriting the path
integral as a sum \cite{B,Oriti,EPRL}.  Just as we can build spin foam
models starting from a group, we can try to do the same starting with
a 2-group.  Crane and Sheppeard proposed using the Poincar\'e 2-group
to build such a model \cite{CS}, and the mathematics needed to carry
out this proposal was developed by a number of authors \cite{BBFW,CY,Y}.  

The resulting spin foam model \cite{BW}, based on representations of the 
Poincar\'e 2-group, provides a representation-theoretic interpretation
of a model developed by Baratin and Freidel \cite{BF}.  These authors have
conjectured a fascinating relationship between this spin foam model
and Feynman diagrams in ordinary quantum field theory on Minkowski
spacetime: this spin foam model could be a `quantum model of flat
spacetime'.  However, the physical meaning of this spin foam model
remains unclear, because the corresponding classical field
theory---perhaps some sort of higher gauge theory---is not known.

This brings us to our second puzzle: the apparent shortage of
interesting classical field theories involving 2-connections.  The
reason is fairly simple.  In a gauge theory with group $G$, the most
important field is a connection, which can be seen locally as a
$\g$-valued 1-form $A$.  In a higher gauge theory based on a crossed
module $t \maps H \to G$, the most important field is a 2-connection.
This can be seen locally as a $\g$-valued 1-form $A$ together with an
$\h$-valued 2-form $B$.  However, $A$ and $B$ are not independent:
to define parallel transport along curves and surfaces in a well-behaved 
way, they must satisfy an equation, the `fake flatness condition':
\[            F = \dt(B) . \]
Here $F$ is the curvature of $A$ and we use $\dt \maps \h \to \g$, the
differential of the map $t \maps H \to G$, to convert $B$ into a
$\g$-valued 2-form. 

So far it seems difficult to get this condition to arise naturally
in field theories, except in theories without local degrees of
freedom.  For example, we can take any simple Lie group $G$, let $H$
be the vector space $\g$ viewed as an abelian Lie group, take $t$ to
be trivial, and use the adjoint action of $G$ on $H$.  This gives a
Lie 2-group called the `tangent 2-group' of $G$ \cite{BL}.  In this
case the fake flatness condition actually says that $A$ is flat: $F =
0$.  This equation is one of the field equations for 4d BF
theory, which has this Lagrangian:
\[             L = \tr(B \wedge F)  .\]
So, the solutions of 4d theory can be seen as 2-connections.  However,
in part because all flat connections are locally gauge equivalent,
there is no physical way to distinguish between two solutions of 4d BF
theory in a contractible region of spacetime.  We thus say that this
theory has no local degrees of freedom.

Another theory without local degrees of freedom, called `BFCG theory',
starts with a 2-connection for a fairly general Lie 2-group together
with some extra fields \cite{GirelliPfeifferPopescu, MartinsMikovic}.
The equations of motion again imply fake flatness.  A 
proposal due to Mikovi\'c and Vojinovi\'c \cite{MikovicVojinovic}
involves modifying the BFCG action for the Poincare 2-group to obtain
a theory equivalent to general relativity.  However, after this
modification the equations of motion no longer imply fake flatness.
Moreover, for solutions, the $\h$-valued 2-form in the would-be
2-connection is always zero.  So, the problems of finding a clear
geometrical interpretation of Poincar\'e 2-connections, and finding a
physically interesting theory involving such 2-connections, still
stand.

In this paper we suggest a way to solve both these problems in one
blow.  The idea is to treat gravity in 4d spacetime as a higher gauge
theory based on the Poincar\'e 2-group, or some larger 2-group.  We do
this using a reformulation of general relativity called `teleparallel
gravity'.  This theory is locally equivalent to general relativity, at
least in the presence of spinless matter.  Einstein studied it
intensively from 1928 to 1931 \cite{Sauer}, and he also had a
significant correspondence on the subject with \'Elie Cartan
\cite{EC}.

As in general relativity, the idea in teleparallel gravity is to start
with a metric on the spacetime manifold $M$ and then choose a
metric-compatible connection $\omega$ on the tangent bundle of $M$.
However, instead of taking $\omega$ to be the torsion-free (and
typically curved) Levi-Civita connection, we take $\om$ to be flat
(and typically with nonzero torsion).  There is always a way to do
this, at least locally.  To do it, we can pick any orthonormal coframe
field $e$ and let $\om$ be the unique flat metric-compatible
connection for which the covariant derivative of $e$ vanishes.  That
is, we define $\om$ by declaring a vector $v$ to be parallel
transported if $e(v)$ is constant along the path.  This choice of
$\om$ is called the `Weitzenb\"ock connection' for $e$.  And then,
remarkably, one can convert all of the standard equations in general
relativity into equations involving the coframe and its Weitzenb\"ock
connection, with no reference to the Levi-Civita connection.

How is this related to the Poincar\'e 2-group?  Recall that in the
Poincar\'e 2-group, $G$ is the Lorentz group and $H$ is the group of
translations of Minkowski spacetime.  The Weitzenb\"ock connection $\om$
can locally be seen as a $\g$-valued 1-form, and its torsion, defined
by
\[
     T(v,w) = \D_v w -\D_w v -[v,w]
\]
can locally be seen as an $\h$-valued 2-form, namely the exterior
covariant derivative $d_\om e$.  So, $\om$ and $T$ are the right
sort of objects---at least locally, but in fact globally---to form a
2-connection with the Poincar\'e 2-group as gauge 2-group.  Even
better, since the map $t \maps H \to G$ is trivial, the fake flatness
condition says the Weizenb\"ock connection $\om$ is flat,
\emph{which is true}.

In short, general relativity can be reformulated as a theory involving
a flat metric-compatible connection $\om$.  Even though the connection
is flat, this theory has local degrees of freedom because the torsion
$T$ is nonzero and contains observable information about the local geometry
of spacetime.  Furthermore, the pair $(\om,T)$ fit together to form a
2-connection, and the relevant gauge 2-group is the Poincar\'e 2-group.

However, teleparallel gravity involves not just the Weitzenb\"ock
connection, but also the coframe field $e$.  So, for a higher gauge
theory interpretation of teleparallel gravity, we would also like to
understand the coframe field as part of a 2-connection.  For this, it
is helpful to recall that an ordinary Poincar\'e connection, say
$A$, consists of two parts: a Lorentz connection $\omega$ and a
1-form $e$ valued in the Lie algebra of the translation group.  When
$e$ obeys a certain nondegeneracy condition, it is precisely the same
as a coframe field.  The curvature of $A$ then consists of two parts,
the curvature of $\om$:
\[              R = d \om + \om \wedge \om \]
and the torsion:
\[              T = d_\omega e.  \]

To take advantage of these well-known facts, we enlarge the Poincar\'e
2-group to the `teleparallel 2-group'.  This comes from the crossed
module $t \maps H \to G$ where $G$ is the Poincar\'e group, $H$ is the
translation group of Minkowski spacetime, $t$ is the inclusion, and
$G$ acts on $H$ by conjugation, using the fact that $H$ is a normal
subgroup of $G$.  A 2-connection with this gauge 2-group turns out to
consist of three parts:
\begin{itemize}
\item a flat Lorentz group connection $\omega$,
\item a 1-form $e$ valued in the Lie algebra of the translation group,
\item the torsion $T = d_\omega e$.
\end{itemize}
These are precisely the data involved in teleparallel gravity.

In what follows we flesh out this story.  We begin in Section
\ref{higher} with a review of higher gauge theory, including Lie
groupoids, Lie 2-groups, 2-bundles and 2-connections.  To minimize
subtleties that are irrelevant here, and make the paper completely
self-contained, we work in the `strict' rather than the fully general
`weak' framework.  In Section \ref{poincare}, we begin by describing the
relation between teleparallel geometry and Poincar\'e 2-connections.
The highlight of this section is Theorem \ref{thm:main}, which among
other things describes Poincar\'e 2-connections on a principal
2-bundle called the `2-frame 2-bundle' canonically associated to any
semi-Riemannian manifold.  

In Section \ref{teleparallel-2-group}, we introduce the teleparallel
2-group.  Theorem \ref{telpq-conn} does for this 2-group what 
\ref{thm:main} did for the Poincar\'e 2-group.  Then, we describe how
to express the Lagrangian for teleparallel gravity as a function of a
teleparallel 2-connection. This action is not invariant under all
teleparallel 2-group gauge transformations, but only under those in a
sub-2-group.  This may seem disappointing at first, but it mirrors
what we are already familiar with in the Palatini formulation of
general relativity, where the fields can be seen as forming a
Poincar\'e connection, but the action is only invariant under gauge
transformations lying in the Lorentz group.  This phenomenon in
Palatini gravity can be neatly understood using Cartan geometry
\cite{Wise2}.  So, it is natural to expect that the similar phenomemon
in teleparallel gravity can be understood using `Cartan 2-geometry',
and we present some evidence that this is the case.  We also consider
what happens when we go beyond the `strict' framework discussed here
to the more general `weak' framework.

\section{Higher gauge theory}
\label{higher} 

Here we introduce all the higher gauge theory that we will need in
this paper.  First we explain the Poincar\'e 2-group $\Po(p,q)$ and
its Lie 2-algebra.  Then we explain principal 2-bundles and show that
any semi-Riemannian manifold of signature $(p,q)$ has a principal
2-bundle over it whose structure 2-group is $\Po(p,q)$.  We take a
businesslike approach, often sacrificing generality and elegance for
efficiency.  For a more well-rounded introduction to higher gauge
theory, see our review article \cite{BH} and the more advanced
references therein.

\subsection{Lie groupoids}
\label{Lie groupoids}

The first step towards higher gauge theory is to generalize the
concept of `manifold' to a kind of space that has, besides 
points, also arrows between points.  There are many ways to do 
this, all closely related but differing in technical details.  
Here we use the concept of a `Lie groupoid'.  A \define{groupoid} 
is a category where every morphism has an inverse; we can 
visualize the objects of a groupoid as points and the morphisms 
as arrows.  A Lie groupoid is basically a groupoid where the set 
of points forms a smooth manifold, and so does the set of arrows.  
More precisely:

\begin{defn} 
A \define{Lie groupoid} $\X$ is a groupoid where:
\begin{itemize}
\item the collection of objects is a manifold, say $\X_0$;
\item the collection of morphisms is a manifold, say $\X_1$;
\item the maps $s,t \maps \X_1 \to \X_0$ sending each
morphism to its source and target are smooth;
\item the map sending each object to its identity morphism is smooth;
\item the map sending each morphism to its inverse is smooth;
\item the set of composable pairs of morphisms is a submanifold of 
$\X_1 \times \X_1$, and composition is a smooth map from this
submanifold to $\X_1$.  
\end{itemize}
\end{defn}
To ensure that the set of composable pairs of morphisms is a
submanifold, it suffices to assume that the map $s$, or 
equivalently $t$, is a submersion.  This assumption is commonly 
taken as part of the definition of a Lie groupoid, and the reader 
is welcome to include it, since it holds in all our examples, but 
we will not actually need it.

We will use the obvious naive notion of map between Lie groupoids:
\begin{defn} 
Given Lie groupoids $\X$ and $\Y$, a \define{map} $f \maps \X \to
\Y$ is a functor for which:
\begin{itemize}
\item the map sending objects to objects, say $f_0 \maps \X_0 \to 
\Y_0$, is smooth;
\item the map sending morphisms to morphisms, say $f_1 \maps \X_1 
\to \Y_1$, is smooth.
\end{itemize}
\end{defn}
There is another more general notion of map between Lie groupoids, and
in this more general context Lie groupoids may be identified with
`differentiable stacks' \cite{BX,Heinloth}.  However, in this paper we
only need maps of the above type.

Many interesting examples of Lie groupoids can be found in Mackenzie's
book \cite{Mackenzie}.  Here are two rather trivial examples we will
need:

\begin{example} 
\label{2space.1}
Any manifold can be seen as a Lie groupoid with only identity
morphisms. In what follows, we freely treat manifolds as Lie groupoids
in this way. Note that a map between Lie groupoids of this type is the
same as a smooth map between manifolds.
\end{example}

\begin{example}
\label{2space.4}
Given Lie groupoids $\X$ and $\Y$, there is a \define{product} Lie
groupoid $\X \times \Y$ with $(\X \times \Y)_0 = \X_0 \times \Y_0$ and
$(\X \times \Y)_1 = \X_1 \times \Y_1$, where the source and target
maps, identity-assigning map and composition of morphisms are defined
componentwise.  Note that this product comes with \define{projection}
maps from $\X \times \Y$ to $\X$ and to $\Y$.
\end{example}

\subsection{2-Groups}
\label{2-groups}

In general, a Lie 2-group is a Lie groupoid equipped with a
multiplication that obeys the group axioms {\it up to isomorphism}
\cite{BL, Henriques, Schommer-Pries}.  But the 2-groups needed
in this paper, including the Poincar\'e 2-group, are all 
`strict': the group laws hold on the nose, as equations.  So, we 
need only the definition of strict Lie 2-groups:

\begin{defn} A \define{(strict) Lie 2-group} $\G$ is a Lie groupoid 
equipped with an \define{identity} object $1 \in \G$, a
\define{multiplication} map $m \maps \G \times \G \to \G$, and an
\define{inverse} map $\inv \maps \G \to \G$ such that the usual group
axioms hold.
\end{defn}

\noindent
Note that given a Lie 2-group $\G$, the manifold of objects 
$\G_0$ forms a Lie group with multiplication $m_0$, and the 
manifold of morphisms $\G_1$ forms a Lie group with 
multiplication $m_1$. So, a 2-group consists of two groups, but 
with further structure given by the source and target maps $s,t 
\maps \G_1 \to \G_0$, the identity-assigning map $i \maps \G_0 
\to \G_1$, and composition of morphisms.

We are mainly interested in physics on spacetimes with several 
space dimensions and one time dimension, but the construction of 
the Poincar\'e 2-group works for any signature.  So, let us fix 
natural numbers $p,q \ge 0$ and define $\R^{p,q}$ to be the 
vector space $\R^{p+q}$ equipped with the metric
\[  ds^2 = dx_1^2 + \cdots + dx_p^2 - dx_{p+1}^2 - \cdots - dx_{p+q}^2 .\]
We often think of  $\R^{p,q}$ as a group, and call it the 
\define{translation group}.  We write $\O(p,q)$ to mean the group 
of linear isometries of $\R^{p,q}$.  With some abuse of language, 
let us call this group the \define{Lorentz group}.  Similarly, we 
define the \define{Poincar\'e group} to be the semidirect product
\[            \IO(p,q) = \O(p,q) \ltimes \R^{p,q}  \]
where $\O(p,q)$ acts on $\R^{p,q}$ as linear transformations in 
the obvious way.  Multiplication in the Poincar\'e group is given 
by 
\[
        (g,h)(g',h') = (gg',h +gh'),
\]
where an element $(g,h)$ consists of an element $g \in \O(p,q)$ and an
element $h \in \R^{p,q}$.  We denote the Poincar\'e group as
$\IO(p,q)$ because it is also called the \define{inhomogeneous
orthogonal group}.  We will not need the smaller inhomogenous
special orthogonal group $\ISO(p,q)$.

The Poincar\'e 2-group is very similar to the Poincar\'e group.  The
ingredients used to build it are just the same: the Lorentz group and
its action on the translation group.  The difference is that now the
translations enter at a higher categorical level than the Lorentz
transformations: namely, as morphisms rather than objects.

\begin{defn}  The \define{Poincar\'e 2-group} $\Po(p,q)$ is
the Lie 2-group where:
\begin{itemize}
\item The Lie group of objects is the Lorentz group $\O(p,q)$.
\item The Lie group of morphisms is the Poincar\'e group $\IO(p,q)$.
\item The source and target of a morphism $(g,h) \in \IO(p,q)$ 
are both equal to $g$.
\item The composite morphism $(g,h') \circ (g,h)$ is 
$(g,h' + h)$, where addition is done as usual in $\R^{p,q}$.
\end{itemize}
\end{defn}

In what follows, we will also need to see the Lorentz group as
a 2-group of a degenerate sort.  This works as follows:

\begin{example} 
\label{2group.1}
In Example \ref{2space.1} we saw that any manifold may be seen as a
Lie groupoid with only identity morphisms.  As a corollary, any Lie
group $G$ may be seen as a 2-group whose morphisms are all identity
morphisms.  This 2-group has $G$ as its group of objects and also $G$
as its group of morphisms; by abuse of language we call this 2-group
simply $G$.
\end{example}

\subsection{Actions of Lie 2-groups}
\label{Actions}

Just as Lie groups can act on manifolds, Lie 2-groups can act
on Lie groupoids.  Here we only consider `strict' actions:

\begin{defn} Given a Lie 2-group $\G$, a \define{(strict) left $\G$ 2-space}
is a Lie groupoid $\X$ equipped with a map $\alpha \maps 
\G \times \X \to \X$ obeying the usual axioms
for a left group action.  Similarly, a \define{(strict) right $\G$ 2-space}
is a Lie groupoid $\X$ equipped with a map $\alpha \maps 
\X \times \G \to \X$ obeying the usual axioms for a right group action.
\end{defn}

\begin{defn} Given a Lie 2-group $\G$ and left 
$\G$ 2-spaces $\X$ and $\Y$, we define a \define{(strict) map of
$\G$ 2-spaces} to be a map of Lie groupoids $f \maps \X \to \Y$ such
that acting by $\G$ and then mapping by $f$ is the same as mapping and
then acting.  In other words, this diagram commutes:
\[
\xymatrix{
\G \times \X \ar[r]^{\alpha_\X} \ar[d]_{1 \times f} & \X \ar[d]^f \\
\G \times \Y \ar[r]_{\alpha_Y} & \Y \\
}
\]
where $\alpha_\X$ is the action of $\G$ on $\X$, and 
$\alpha_\Y$ is the action of $\G$ on $\Y$.  
\end{defn}

The examples we need are these:

\begin{example}
\label{G-2-space.1}
Using the idea in Examples \ref{2space.1} and \ref{2group.1}, 
any manifold can be seen as a Lie groupoid, and any Lie group
can be seen as a Lie 2-group.   Continuing this line of thought,
if the manifold $X$ is a left $G$ space, we can think of
it as a left 2-space for the 2-group.  Similarly, any right $G$ space
can be seen as a right $G$ 2-space for this 2-group.
\end{example}

\begin{example}
\label{G-2-space.2}
Given a Lie 2-group $\G$, the multiplication $m \maps \G \times \G\to \G$
makes $\G$ into both a left $\G$ 2-space and a right $\G$ 2-space.
\end{example}

\begin{example}
\label{G-2-space.3}
If we treat the Lorentz group $\O(p,q)$ as a Lie 2-group following the
ideas in Example \ref{2group.1}, the result is a `sub-2-group' of
$\Po(p,q)$.  This Lie 2-group has a left action on
$\Po(p,q)$, coming from left multiplication.  Explicitly, the action
\[    \alpha \maps \O(p,q) \times \Po(p,q) \to \Po(p,q)  \]
has
\[   \alpha_0 \maps \O(p,q) \times \O(p,q) \to  \O(p,q) \]
given by multiplication in the Lorentz group, and
\[   \alpha_1 \maps \O(p,q) \times \IO(p,q) \to \IO(p,q) \]
given by restricting multiplication in the Poincar\'e group.
\end{example}

\subsection{Crossed modules}
\label{crossed modules}

A `Lie crossed module' is an alternative way of
describing a Lie 2-group.  The reader can find the full 
definition of a Lie crossed module elsewhere \cite{BH,BL}; we do
not need it here.  All we need to know is that a Lie crossed 
module is a quadruple $(G,H,t,\alpha)$ that we can extract from
a Lie 2-group $\G$ as follows:
\begin{itemize}
\item $G$ is the Lie group $\G_0$ of objects of $\G$;
\item $H$ is the normal subgroup of $\G_1$ consisting
of morphisms with source equal to $1 \in G$;
\item $t \maps H \to G$ is the homomorphism sending each 
morphism in $H$ to its target;
\item $\alpha$ is the action of $G$ as automorphisms of $H$ 
defined using conjugation in $\G_1$ as follows: 
$\alpha(g) h = 1_g h {1_g}^{-1}$, 
where $1_g\in \G_1$ is the identity morphism of $g\in G$. 
\end{itemize}
\noindent Conversely, any Lie crossed module gives a Lie 2-group.

If we take all the data in a Lie crossed module and differentiate it, 
we get:
\begin{itemize}
\item
the Lie algebra $\g$ of $G$,
\item
the Lie algebra $\h$ of $H$,
\item 
the Lie algebra homomorphism $\dt \maps \h \to \g$ obtained by
differentiating $t \maps H \to G$, and
\item 
the Lie algebra homomorphism $\dalpha \maps \g \to \aut(H)$ obtained
by differentiating $\alpha \maps G \to \Aut(H)$.
\end{itemize}
Here we write $\dt$ and $\dalpha$ instead of $dt$ and $d\alpha$ 
to reduce confusion in later formulas involving these maps and 
also differential forms, where $d$ stands for the exterior 
derivative. The quadruple $(\g, \h, \dt, \dalpha)$ is called an 
\define{infinitesimal crossed module} \cite{BC,BS}.  It will be 
important in describing 2-connections on a principal 2-bundle 
with structure 2-group $\G$.

The crossed module $(G,H,t,\alpha)$ coming from the Poincar\'e 
2-group works as follows:
\begin{itemize}
\item $G$ is the Lorentz group $\O(p,q)$;
\item $H$ is the translation group $\R^{p,q}$ viewed as an abelian Lie group;
\item $t$ is trivial.
\item $\alpha$ is the obvious representation of $\O(p,q)$ on $\R^{p,q}$.
\end{itemize}
Differentiating all this, we obtain an infinitesimal crossed
module where:
\begin{itemize}
\item $\g = \o(p,q)$;
\item $\h = \R^{p,q}$ viewed as an abelian Lie algebra;
\item $\dt$ is trivial;
\item $\dalpha$ is the obvious representation of $\o(p,q)$ on $\R^{p,q}$.
\end{itemize}

\subsection{Principal 2-bundles}
\label{principal 2-bundles}

In what follows we take a lowbrow, pragmatic approach to 2-bundles. In
particular, we only define `strict' 2-bundles, since those are all we
need here.  The interested reader can find more sophisticated
treatments elsewhere \cite{Bartels:2004, Breen:2008, BreenMessing,
SW}.

Just as a bundle involves a `projection' $p \maps E \to M$ that is a
map between manifolds, a strict 2-bundle involves a map $p \maps \E
\to M$.  Here $\E$ is a Lie groupoid, but for our applications $M$ is
a mere manifold, regarded as a Lie groupoid as in Example
\ref{2space.1}.

The key property we require of a bundle is `local triviality'.  
For this, note that given any open subset $U \subseteq M$, there 
is a Lie groupoid $\E|_U$ whose objects and morphisms are 
precisely those of $\E$ that map under $p$ to objects and 
morphisms lying in $U$.  Then we can restrict $p$ to $\E|_U$ and 
obtain a map we call $p|_U \maps \E|_U \to U$.  The local 
triviality assumption says that every point $x \in M$ has a 
neighborhood $U$ such that $p|_U \maps \E|_U \to U$ is
a `trivial 2-bundle'.  More precisely:

\begin{defn}
\label{2-bundle}
  A \define{(strict) 2-bundle} consists of:
  \begin{itemize}
    \item
      a Lie groupoid $\E$ (the \define{total 2-space}),
    \item
      a manifold $M$ (the \define{base space}),
    \item
      a map $p \maps \E \to M$ 
      (the \define{projection}), and
    \item a Lie groupoid $\F$ (the \define{fiber}),
  \end{itemize}
such that for every point $p \in M$ there is an open neighborhood $U$ 
containing $p$ and an isomorphism
  \[
   t \maps E|_U \to U \times \F,
  \]
called a \define{local trivialization}, such that this diagram commutes:
\[ 
\xymatrix @!0
{ E|_U
 \ar [ddr]_{p}
  \ar[rr]^{t} 
  & &  
  U \times \F
  \ar[ddl]^{p_{{}_U}}
  \\  \\ &
   U 
 }
\]
\end{defn}

\begin{defn}
Given a Lie 2-group $\G$, a \define{(strict) principal $\G$ 2-bundle}
is a 2-bundle $p \maps \P \to M$ where:
\begin{itemize}
\item the fiber $\F$ is $\G$,
\item the total 2-space $\P$ is a right $\G$ 2-space, and
\item for each point $p \in M$ there is a neighborhood $U$ containing
$p$ and a local trivialization $t \maps E|_U \to U \times \F$
that is an isomorphism of right $\G$ 2-spaces.
\end{itemize}
\end{defn}
It is worth noting that just as a 2-group consists of two groups 
and some maps relating them, a principal 2-bundle consists of two 
principal bundles and some maps relating them.  Suppose $p \maps 
\P \to M$ is a principal $\G$ 2-bundle.  Then the bundle of 
objects, $p_0 \maps \P_0 \to M$, is a principal $\G_0$ bundle, 
and the bundle of morphisms, $p_1 \maps \P_1 \to M$, is a 
principal $\G_1$ bundle.  These are related by the source
and target maps $s,t \maps \P_1 \to \P_0$, the identity-assigning 
map $i \maps \P_0 \to \P_1$, and the map describing composition 
of morphisms.  All these are maps between bundles over $M$.
 
We can build principal 2-bundles using transition functions.
First recall the situation in ordinary gauge theory: suppose $G$ 
is a Lie group and $M$ a manifold.  In this case we can build a 
$G$ bundle over $M$ using transition functions.  To do this, we 
write $M$ as the union of open sets or \define{patches} $U_i 
\subseteq M$:
\[ M = \bigcup_i U_i. \]
Then, choose a smooth transition
function on each double intersection of patches:
\[ g_{ij} \maps U_i \cap U_j \to G .\]
These transition functions give gauge transformations.  We can 
build a principal $G$ bundle over all of $M$ by gluing together 
trivial bundles over the patches with the help of these gauge 
transformations.  However, this procedure will only succeed if 
the transition functions satisfy a consistency condition on each 
triple intersection:
\[ g_{ij}(x) g_{jk}(x) = g_{ik}(x) \]
for all $x \in U_i \cap U_j \cap U_k$.  This equation is called 
the \define{cocycle condition}.  

A similar recipe works for higher gauge theory. Now let $\G$ be a Lie
2-group with $G$ as its Lie group of objects.  To build a principal
$\G$ 2-bundle, it suffices to choose transition functions on double
intersections of patches:
\[ g_{ij} \maps U_i \cap U_j \to G \]
such that the cocycle condition holds.  Conversely, given a 
strict principal $\G$ 2-bundle, a choice of open neighborhoods 
containing each point and local trivializations gives rise to 
transition functions obeying the cocycle condition.  In a more 
general `weak' principal 2-bundle, the cocycle condition would 
need to hold only up to isomorphism \cite{Bartels:2004}.  While 
this generalization is very important, none of our work here 
requires it.

For now we only give a rather degenerate example of a principal
2-bundle.  The next section will introduce a more interesting 
example, one that really matters to us: the `2-frame 2-bundle'.

\begin{example} 
\label{2bundle.1}
Suppose $p \maps P \to M$ is a principal $G$ bundle.  Then we can
regard $G$ as a Lie 2-group as in Example \ref {2group.1}, and $P$ as
a right 2-space of this Lie 2-group as in Example \ref{G-2-space.1}.
Then $p \maps P \to M$ becomes a principal 2-bundle with this Lie
2-group as its structure 2-group.
\end{example}

\subsection{Associated 2-bundles}
\label{2-frame 2-bundle}

Just as principal bundles have associated bundles, principal 
2-bundles have associated 2-bundles.  Given a principal $\G$ 
2-bundle $p \maps \P \to M$ and a left $\G$ 2-space $\F$, we can 
construct an \define{associated 2-bundle}
\[       q \maps \P \times_G \F \to M  \]
with fiber $\F$.  To do this, we first construct a Lie groupoid 
$\P \times_\G \F$.   This Lie groupoid has a manifold of objects
$(\P \times_\G \F)_0$ and a manifold of morphisms $(\P \times_\G 
\F)_1$.   Both these are given as quotient spaces:
\[   (\P \times_\G \F)_i = \frac{\P_i \times \F_i}{(xg, f) \sim (x, gf)} \]
where $xg \in \P_i$ is the result of letting $g \in \G_i$ act on the
right on $x \in \P_i$, and $gf \in \F_i$ is the result of letting $g
\in \G_i$ act on the left on $f \in \F_i$.  One can check that these
quotient spaces are indeed manifolds.  One can also check that the
usual composition of morphisms in $\P \times \F$ descends to $\P
\times_\G \F$, and that $\P \times_\G \F$ is a Lie groupoid.  Then, we
can define a map
\[       q \maps \P \times_G \F \to M  \]
sending the equivalence class of any object $(x,f)$ to the object 
$p(x)$, and doing the only possible thing on morphisms.  Finally, 
we can check that $q \maps \P \times_G \F \to M$ is a 2-bundle 
with fiber $\F$.

To connect teleparallel gravity to higher gauge theory, we 
take advantage of the fact that any manifold with a metric 
comes equipped with principal 2-bundle whose structure 2-group is 
the Poincar\'e 2-group.  We call this the `2-frame 2-bundle'. 
To build it, we start with a manifold $M$ equipped 
with a semi-Riemannian metric $g$ of signature $(p,q)$.  Let 
$p \maps FM \to M$ be the bundle of orthonormal frames, a principal
$\O(p,q)$ bundle.  As in Example \ref{2bundle.1} we can think of
$\O(p,q)$ as a Lie 2-group and
$p \maps FM \to M$ as a principal $\O(p,q)$ 2-bundle.

Next, recall from Example \ref{G-2-space.3} that the Poincar\'e
2-group $\Po(p,q)$ is a left 2-space for the 2-group $\O(p,q)$.  
This lets us form the associated 2-bundle $q \maps FM \times_{\O(p,q)}
\Po(p,q) \to M$.  

\begin{defn} 
\label{2FM} 
Given a manifold $M$ equipped with a metric of signature
$(p,q)$, we set
\[      \tfm = FM \times_{\O(p,q)} \Po(p,q) \]
and define the \define{2-frame 2-bundle} of $M$ to be 
\[           q \maps \tfm \to M . \]
\end{defn}

Note that since $\Po(p,q)$ is a right 2-space for itself, 
$\tfm$ is also a right $\Po(p,q)$ 2-space.  Using this fact,
it is easy to check the following:

\begin{prop}  \label{2-frame}  Given a manifold $M$ equipped with
a metric of signature $(p,q)$, the 2-frame 2-bundle $\tfm$ is a
principal 2-bundle with structure 2-group $\Po(p,q)$. 
\end{prop}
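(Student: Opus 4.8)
The plan is to check the three defining conditions of a principal $\Po(p,q)$ 2-bundle in turn. The first condition, that the fiber is $\Po(p,q)$, is built into the construction: $\tfm$ is the 2-bundle associated to $FM$ using the left $\O(p,q)$ 2-space $\Po(p,q)$ of Example \ref{G-2-space.3}, and by the general associated-2-bundle construction above this is a 2-bundle with fiber $\Po(p,q)$. So the underlying local triviality as a 2-bundle is already in hand, and the real content is to equip $\tfm$ with a right $\Po(p,q)$ 2-space structure and to upgrade the local trivializations to isomorphisms of right $\Po(p,q)$ 2-spaces.

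For the right action, I would start from Example \ref{G-2-space.2}, where $\Po(p,q)$ is a right 2-space for itself via right multiplication. The key observation is that this right action commutes with the left $\O(p,q)$ action used to build $\tfm$: in each of the groups $\O(p,q)$ and $\IO(p,q)$, left multiplication by an element of $\O(p,q)$ commutes with right multiplication by any element, simply by associativity. Hence, at the level of both objects ($i=0$) and morphisms ($i=1$), right multiplication on the $\Po(p,q)$ factor respects the equivalence relation and descends to the quotient defining $\tfm$. This gives $\tfm$ the desired right $\Po(p,q)$ 2-space structure.

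The heart of the matter is producing equivariant local trivializations. Here I would exploit that $FM$ is locally trivial as a principal $\O(p,q)$ bundle: each $x \in M$ has a neighborhood $U$ with an $\O(p,q)$-equivariant trivialization $FM|_U \iso U \times \O(p,q)$. Restricting the associated 2-bundle to $U$ then gives $\tfm|_U \iso (U \times \O(p,q)) \times_{\O(p,q)} \Po(p,q)$, and I would use the categorified version of the standard fact that inducing a trivial principal bundle along the fiber action recovers the trivial bundle with the new fiber. Concretely, sending the class of $((u,g),f)$ to $(u, gf)$---where $gf$ denotes the left action of $g \in \O(p,q)$ on $f \in \Po(p,q)$, i.e.\ multiplication in $\Po(p,q)$---defines an isomorphism
\[
   (U \times \O(p,q)) \times_{\O(p,q)} \Po(p,q) \iso U \times \Po(p,q)
\]
over $U$. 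Its right $\Po(p,q)$-equivariance is immediate from the commuting of the two actions: acting by $k \in \Po(p,q)$ sends the class of $((u,g),f)$ to the class of $((u,g),fk)$, which maps to $(u,(gf)k)$, matching right multiplication on the $\Po(p,q)$ factor of $U \times \Po(p,q)$.

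The step I expect to be the main obstacle is not conceptual but bookkeeping. Because $\tfm$ is a Lie groupoid rather than a manifold, every map above must be defined and verified at both the object and the morphism level, and shown to be compatible with source, target, identities, and composition, as well as smooth; one must also confirm, as asserted in the general construction, that the relevant quotients are genuinely manifolds. Once this groupoid-level verification is carried out, recognizing $\tfm$ as the categorified extension of the structure 2-group of $FM$ along the inclusion of $\O(p,q)$ as a sub-2-group of $\Po(p,q)$ reduces each remaining check to a routine analog of the corresponding, well-known statement for ordinary principal bundles.
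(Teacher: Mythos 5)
Your proposal is correct and follows exactly the route the paper intends: the paper offers no detailed argument, merely remarking that since $\Po(p,q)$ is a right 2-space for itself, $\tfm$ inherits a right $\Po(p,q)$ 2-space structure, and that the rest is ``easy to check.'' Your writeup---descending the right action to the quotient via commutativity of left $\O(p,q)$ and right $\Po(p,q)$ multiplication, and upgrading local trivializations of $FM$ to equivariant trivializations of $\tfm$ via $[(u,g),f]\mapsto (u,gf)$ at both object and morphism levels---is precisely that check, carried out correctly.
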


\subsection{2-Connections}
\label{2-connections}

Just as a connection on a bundle over a manifold $M$ allows us to
describe parallel transport along curves in $M$, a `2-connection' on a
2-bundle over $M$ allows us to describe parallel transport along
\emph{curves and surfaces}.  Just as a connection on a $G$ bundle can
be described locally as 1-form taking values in the Lie algebra $\g$,
a connection on a trivial $\G$ 2-bundle can be described locally as a
$\g$-valued 1-form $A$ together with an $\h$-valued 2-form.  
in order to consistently define parallel transport along curves and
surfaces, $A$ and $B$ need to be related by an equation called 
the `fake flatness condition'.  The 
concept of `fake curvature' was first introduced by Breen and Messing
\cite{BreenMessing}, but only later did it become clear that a consistent 
theory of parallel 
transport in higher gauge theory requires the fake curvature to 
vanish \cite{BS,SW}.  

Suppose $\G$ is a 2-group whose infinitesimal crossed module is
$(\g,\h,\dt,\dalpha)$.  Then a \define{2-connection} on the trivial
principal $\G$ 2-bundle over a smooth manifold $M$ consists of
\begin{itemize}
\item
a $\g$-valued 1-form $A$ on $M$ and
\item
an $\h$-valued 2-form $B$ on $M$ 
\end{itemize}
such that the \define{fake flatness} condition holds:
\[            dA + A \wedge A = \dt(B) . \]
Here we use $\dt \maps \h \to \g$, to convert $B$ into a
$\g$-valued 2-form.  As usual, $A \wedge A$ really stands for
$\frac{1}{2}[A,A]$, defined using the wedge product of 1-forms together
with the Lie bracket in $\g$.

To describe 2-connections on a general strict $\G$ 2-bundle, we need to 
know how 2-connections on a trivial 2-bundle transform under 
gauge transformations.  Here we only consider `strict' gauge transformations:

\begin{defn}
\label{gauge transformation}
Given a principal $\G$ 2-bundle $p \maps \P \to M$, a \define{(strict)
gauge transformation} is a map of right $\G$ spaces $f \maps \P \to
\P$ such that this diagram commutes:
\[ 
\xymatrix @!0
{ \P
 \ar [ddr]_{p}
  \ar[rr]^{f} 
  & &  
  \P
  \ar[ddl]^{p}
  \\  \\ &
   M 
 }
\]
\end{defn}

It is easy to check that strict gauge transformations on the trivial
principal $\G$ 2-bundle over $M$ are in one-to-one correspondence with
smooth functions $g \maps M \to G$, where $G$ is the group of objects
of $\G$.   The proof is exactly like the proof for ordinary principal
bundles.  So, henceforth we identify gauge transformations on the trivial
$\G$ 2-bundle over $M$ with functions of this sort.

Now suppose we have a 2-connection $(A,B)$ on the trivial principal
$\G$ 2-bundle over $M$.  By definition, a gauge transformation $g
\maps M \to G$ acts on this 2-connection to give a new 2-connection 
$(A',B')$ as follows:
\[       
\begin{array}{ccl}
A' &=& g A g^{-1} \; + \; g\, dg^{-1}   \\[.3em]
B' &=& \alpha(g)(B) 
\end{array}
\]
The second formula deserves a bit of explanation.  Here we are
composing $\alpha \maps G \to \Aut(H)$ with $g \maps M \to
G$ and obtaining an $\Aut(H)$-valued function $\alpha(g)$, which 
then acts on the $\h$-valued 2-form $B$ to give a new $\h$-valued 
2-form $\alpha(g)(B)$.  

Now, suppose we have a principal $\G$ 2-bundle $p \maps P \to M$ built
using transition functions 
\[    g_{ij} \maps U_i \cap U_i \to G \]
as described in Section \ref{principal 2-bundles}.  Note that each
function $g_{ij}$ can be seen as a gauge transformation.  To equip $P$
with a \define{strict 2-connection}, we first put a strict
2-connection on the trivial 2-bundle over each patch $U_i$.  So, on
each patch we choose a $\g$-valued 1-form $A_i$ and an $\h$-valued
2-form $B_i$ obeying
\[            dA_i + A_i \wedge A_i = \dt(B_i)  \]
Then, we require that on each intersection of patches $U_i \cap U_j$,
the 2-connection $(A_i,B_i)$ is the result of applying the gauge
transformation $g_{ij}$ to $(A_j,B_j)$:
\begin{eqnarray}
A_i &=& g_{ij} A_j g_{ij}^{-1} \, + \, g_{ij}\, dg_{ij}^{-1} 
\label{local-2conn-1} \\
B_i &=& \alpha(g_{ij})(B_j) 
\label{local-2conn-2} 
\end{eqnarray}
For more details, including the more general case of weak
2-connections on weak 2-bundles, see \cite{BH} and the references
therein.

We can also define strict 2-connections in terms of {\em global} data:
\begin{prop}
\label{global-2conn}
Suppose $p \maps \P\to B$ is a principal $\G$ 2-bundle.  Then a
strict 2-connection $(A,B)$ on $\P$ is the same as a connection $A$ on
the principal $\G_0$ bundle $p_0 \maps \P_0\to B$ together with a
$(\P_0 \times_{\G_0} \h)$-valued 2-form $B$ obeying the fake flatness
condition.
\end{prop}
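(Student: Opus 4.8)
The plan is to set up the standard local-to-global dictionary separately for the two fields and then check that the fake flatness condition is compatible with it. I would work from the transition-function description of $\P$ given in Section \ref{principal 2-bundles}: choosing patches $U_i$ covering the base and trivializations, a strict 2-connection amounts to a collection of pairs $(A_i, B_i)$, where the $A_i$ glue via (\ref{local-2conn-1}), the $B_i$ glue via (\ref{local-2conn-2}), and each pair satisfies the fake flatness equation $dA_i + A_i \wedge A_i = \dt(B_i)$. The proposition is then the statement that this local data is equivalent to the stated global data, so the proof is really a bijection between two descriptions of the same object.

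First I would handle the $A$-field. The transition law (\ref{local-2conn-1}) is exactly the gauge transformation law for the local representatives of a connection on a principal bundle. Since $\P_0 \to M$ is the principal $\G_0$ bundle assembled from the same transition functions $g_{ij} \maps U_i \cap U_j \to G = \G_0$, the standard theory of connections on principal bundles says the $A_i$ patch together into a single connection $A$ on $\P_0$, and conversely any connection on $\P_0$ restricts to local 1-forms obeying (\ref{local-2conn-1}). This half is routine and I would cite it rather than reproduce it. Next the $B$-field: the law $B_i = \alpha(g_{ij})(B_j)$ in (\ref{local-2conn-2}) is precisely the cocycle relation satisfied by the local representatives of a 2-form valued in the associated vector bundle $\P_0 \times_{\G_0} \h$, where $\G_0 = G$ acts on $\h$ through the derivatives of the automorphisms $\alpha(g) \maps H \to H$. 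Hence the $B_i$ assemble into a global $(\P_0 \times_{\G_0} \h)$-valued 2-form $B$, and this assignment is a bijection.

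Finally I would treat the fake flatness condition. On each patch it reads $F_{A_i} = \dt(B_i)$ with $F_{A_i} = dA_i + A_i \wedge A_i$ the local curvature. The key observation is that the crossed-module axiom $t(\alpha(g)h) = g\, t(h)\, g^{-1}$, differentiated in $h$ at the identity, shows that $\dt \maps \h \to \g$ intertwines the $G$-action on $\h$ with the adjoint action on $\g$. Thus $\dt$ is equivariant and induces a bundle map $\dt \maps \P_0 \times_{\G_0} \h \to \P_0 \times_{\G_0} \g = \ad \P_0$. Since the global curvature $F_A$ is a 2-form valued in $\ad \P_0$ whose local representatives are exactly the $F_{A_i}$, the local equations $F_{A_i} = \dt(B_i)$ are mutually consistent across intersections and glue to the single global equation $F_A = \dt(B)$. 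Running the same argument in reverse — starting from a connection $A$ on $\P_0$ and a fake-flat $(\P_0 \times_{\G_0} \h)$-valued 2-form $B$, trivializing, and verifying the transition laws and patchwise fake flatness — establishes the converse and completes the bijection.

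The only step demanding genuine care is this equivariance of $\dt$, since without it the patchwise condition $F_{A_i} = \dt(B_i)$ would fail to be patch-independent and the global equation $F_A = \dt(B)$ would not even be well-defined. Everything else is the familiar translation between cocycle data and global geometric objects, applied once to connections and once to associated-bundle-valued forms, so I expect the write-up to be short once the equivariance is recorded.
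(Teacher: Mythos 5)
Your proposal is correct and follows essentially the same route as the paper: identify $\P$ and $\P_0$ as built from the same transition functions, recognize (\ref{local-2conn-1}) as the gluing law for a connection on $\P_0$ and (\ref{local-2conn-2}) as the gluing law for a $(\P_0 \times_{\G_0} \h)$-valued 2-form, and impose fake flatness. The equivariance of $\dt$, which you rightly flag as the point needing care for the global equation $F = \dt(B)$ to make sense, appears in the paper as well---not inside the proof but as the remark immediately following it.
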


\begin{proof}
The equivalence follows simply by inspection of the local data used to
build a strict 2-bundle with strict 2-connection.  Choosing a suitable
cover $U_i$ of $B$, the principal $\G$ 2-bundle $\P$ and the principal
$\G_0$ bundle $\P_0$ are constructed via the same transition functions
$ g_{ij} \maps U_i \cap U_j \to \G_0 $.  Then (\ref{local-2conn-1}) is
precisely the condition that assembles a family of Lie algebra-valued
1-forms into a connection on $\P_0$.  Similarly, (\ref{local-2conn-2})
is just the equation relating the local descriptions of a 2-form with
values in the associated bundle $\P_0\times_{\G_0} \h$.  The only
additional condition for the 2-connection $(A,B)$ is fake flatness, so
imposing this, we are done.
\end{proof} 

This theorem gives us a way to understand the fake flatness condition
in a global way.  The curvature of a connection $A$ on a principal
$G$ bundle $P$ is a $(P\times_G \g)$-valued 2-form $F$, where
the vector bundle $P\times_G \g$ is built using the adjoint representation 
of $G$.  The map $\dt \maps \h \to \g$ in the infinitesimal crossed module 
is an intertwiner of representations of $G$:
\[
            \Ad(g)\dt(X) = \dt(\xa'(g)X) \qquad \forall g\in G, X\in \h
\] 
where $\xa'$ denotes the representation of $G$ on $\h$ given by
differentiating $\xa$.  Thus $\dt$ induces a vector bundle map
$\dt\maps P\times_G \h \to P\times_G \g$.  Using this, the fake
flatness condition
\[
         F = \dt(B) 
\]  
makes sense globally as an equation between $(P\times_G \g)$-valued forms. 

\subsection{Curvature}
\label{curvature}

There are three kinds of curvature for 2-connections.  Suppose $(A,B)$
is a strict 2-connection on a $\G$ 2-bundle $P \to M$.  By working
locally, we may choose a trivialization for $P$ and treat $A$ as a
$\g$-valued 1-form and $B$ as an $\h$-valued 2-form.  This simplifies
the discussion a bit.

First, just as in ordinary gauge theory, we may define the 
\define{curvature} to be the $\g$-valued 2-form given by:
\[                 F = dA + A \wedge A  .\]
Second, we define the \define{fake curvature} to be the 
$\g$-valued 2-form $F - \dt(B)$.  However, this must equal zero.
Third, we define the \define{2-curvature} to be the $\h$-valued 
3-form given by:
\[                 G = dB + \dalpha(A) \wedge B  .\]
Beware: the symbol $G$ here has nothing to do with the
group $G$.  In the second term on the right-hand side, we compose
$\dalpha \maps \g \to \aut(H)$ with the $\g$-valued 1-form $A$ 
and obtain an $\aut(H)$-valued function $\dalpha(g)$.  Then we 
wedge this with $B$, letting $\aut(H)$ act on $\h$ as part of 
this process, and obtain an $\h$-valued 2-form.

The intuitive idea of 2-curvature is this: just as the curvature 
describes the holonomy of a connection around an infinitesimal 
loop, the 2-curvature describes the holonomy of a 2-connection 
over an infinitesimal 2-sphere.  This can be made precise using 
formulas for holonomies over surfaces \cite{MartinsPicken:2007, 
SW}, which we will not need here.

If the 2-curvature of a 2-connection vanishes, the holonomy over 
a surface will not change if we apply a smooth homotopy to that 
surface while keeping its edges fixed.  A 2-connection whose 
curvature and 2-curvature both vanish truly deserves to be called 
\define{flat}.  

\section{Teleparallel geometry and the Poincar\'e 2-group}
\label{poincare}

With tools of higher gauge theory in hand, we turn now to our main
geometric and physical applications---teleparallel geometry and
teleparallel gravity.

The simplest version of teleparallel gravity may be viewed as a
rewriting of Einstein gravity in which torsion, as opposed to
curvature, plays the lead role.  This results in a theory that is
conceptually quite different from general relativity: if gravity is
interpreted within the teleparallel framework, then Einstein's
original vision appears wrong on several counts.  Unlike in general
relativity, the spacetime of teleparallel gravity is {\em flat}.  As a
consequence, it is possible to compare vectors at distant points to
decide, for example, whether the velocity vectors of two distant
observers are parallel (hence the term `teleparallelism' or
`distant parallelism').  Flat spacetime clearly flies in the face of
Einstein's geometric picture of gravity as spacetime curvature: in
fact, in teleparallel theories, gravity is a {\em force}.

With these features, teleparallel gravity sounds like such a throwback
to the Newtonian understanding of gravity that it would be easy to
dismiss, except for one fact: teleparallel gravity is {\em locally
equivalent} to general relativity.  

Our main interest in teleparallel gravity here is that it involves a
flat connection $\om$ and its torsion $d_\om e$.  As we shall see,
these are precisely the data needed for a Poincar\'e 2-connection on
the 2-frame 2-bundle of spacetime---or more precisely, a 2-bundle
isomorphic to this.  In what follows we briefly sketch some of the
main ideas of teleparallel gravity, and discuss how it may be viewed
as a higher gauge theory.  For more on teleparallel gravity, we
refer the reader to the work of Pereira and others
\cite{AldrovandiPereira, AGP, AP, Itin}.

\subsection{General relativity in teleparallel language}
\label{tel-grav}

The `coframe field' or `vielbein' is important in many approaches to
gravity, and especially in teleparallel gravity.  Locally, a coframe field 
is an $\R^n$-valued 1-form $e \maps TM \to
\R^n$, where $M$ is a $(p+q)$-dimensional manifold.  When $e$ is
nondegenerate---that is, an isomorphism when
restricted to each tangent space---it gives a metric on $M$ via pullback. 

To define coframe fields globally, even when $M$ is not parallelizable, 
we start by fixing a vector bundle $\fake \to M$
isomorphic to the tangent bundle, and equipped with a metric $\eta$ of
signature $(p,q)$.  We call $\fake$ a \define{fake tangent bundle} for
$M$.  We then define a \define{coframe field} to be a 
vector bundle isomorphism
\[ 
\xymatrix @!0
{ TM
 \ar [ddr]
  \ar[rr]^{\displaystyle e} 
  & &  
  \fake
  \ar[ddl]
  \\  \\ &
   M 
 }
\]
The coframe field lets us pull back the inner product on $\fake$ and
get an inner product on the tangent bundle, making $M$ into a
semi-Riemannian manifold with a metric $g$ of signature $(p,q)$.  

As mentioned, these ideas are useful in a number of approaches to
gravity, most notably the Palatini approach, where the fiber $\fake_x$ 
is sometimes called the `internal space'.  But in teleparallel gravity we
must go further and assume that $\fake$ is also equipped with a flat
metric-preserving connection, say $D$.  We can then use the coframe
field to pull this back to a connection on $TM$, the
\define{Weitzenb\"ock connection}.  We write this as $\D$.  By
construction, the Weitzenb\"ock connection is flat and
metric-compatible.  However, its torsion:
\[
     T(v,w) = \D_v w -\D_w v -[v,w]
\]
is typically nonzero.

The idea behind teleparallel gravity is to take familiar notions from 
general relativity and write them in terms of the coframe field $e$ and 
its Weitzenb\"ock connection $\D$ instead of the metric $g$ and its 
Levi-Civita connection $\LC$.  To do this it is useful to consider
the \define{contorsion}, which is the difference of the Weitzenb\"ock
and Levi-Civita connections:
\[     K =  \D - \LC \]
Since the Levi-Civita connection can be computed from the metric, which 
in turn can be computed from the coframe field, we can write
$K$ explicitly in terms of the coframe field $e$ and its Weitzenb\"ock
connection $\D$.  We do not need the explicit formula here: we merely
want to note how the contorsion exhibits the physical meaning of
teleparallel gravity.

For example, consider the motion of a particle in free fall.   In general 
relativity, the particle's worldline $\gamma(s)$ obeys the geodesic equation 
for the Levi-Civita connection, stating that its covariant acceleration 
vanishes:
\[
    \LC_{v(s)} v(s) = 0
\]
where $v(s) = \gamma'(s)$ is the particle's velocity.  In teleparallel
gravity, on the other hand, the equation governing a particle's motion
becomes
\[
    \D_{v(s)} v(s) =  K_{v(s)} v(s) 
\]
From this perspective, the particle accelerates away from the geodesic 
determined by the connection $\D$, and the contorsion is interpreted 
as the \textit{gravitational force} responsible for the acceleration. 

This may seem like a shell game designed to hide the `true meaning' of
general relativity.  On the other hand, it is interesting that gravity
admits such a qualitatively different alternative interpretation.
Moreover, while we are only considering the teleparallel equivalent of
general relativity here, more general versions of teleparallel gravity
make physical predictions different from those of general relativity
\cite{AP,Itin}.

In any case, using the same strategy to rewrite the Einstein--Hilbert 
action of general relativity, one obtains, up to a boundary term, 
the teleparallel gravity action \cite{AP,Maluf}, which is proportional to:
\begin{equation}
\label{tele-action}
 S[e] = \int d^nx\; \det(e) \left(
     \frac14 T^\rho{}_{\mu\nu} T_\rho{}^{\mu\nu} 
     + \frac12 T^\rho{}_{\mu\nu} T^{\nu\mu}{}_\rho 
     - T_{\rho\mu}{}^\rho T^{\nu\mu}{}_\nu
   \right)
\end{equation}
Here the components of the torsion in a coordinate frame $x^\mu$ 
are given by
\[
     T(\partial_\mu,\partial_\nu) = T^\rho{}_{\mu\nu} \partial_\rho ,
\]
and indices are moved using the metric pulled back along $e$.
We have written $d^n x \det(e)$ for the volume form, where
$n=p+q$ is the dimension of spacetime.  Of course, one usually
considers the case of 3+1 dimensions, but the action can be written in
arbitrary dimension and signature.  Note also that while the action
involves the torsion $T$, this is a function of the Weitzenb\"ock
connection, which is a function of the coframe field, so the action is
a function only of $e$.

To obtain the field equations, one can vary the action $S[e]$ with
respect to the coframe field $e$.  Of course, we can also `cheat',
using the known answer from general relativity.  We simply convert
$S[e]$ back into the usual Einstein--Hilbert action, and then recast
the resulting Einstein equations in terms of the coframe and
Weitzenb\"ock connection.  We shall not do this here, since we make no
use of these equations; we refer the reader to references already
cited in this section for more details.

\subsection{Torsion and the coframe field}
\label{torsion}

Since torsion is less widely studied than curvature---many geometry
and physics books set torsion to zero from the outset---we recall some
ideas about it here.  We take an approach suited to teleparallel
gravity, but also to our interpretation of it in terms of the 
Poincar\'e 2-group.  

We start with a manifold $M$ equipped with a fake tangent bundle
$\fake \to M$.  From this we can build a principal $\O(p,q)$
bundle $\ff \to M$, called the \define{fake frame bundle}.  The idea is
to mimic the usual construction of the frame bundle of a
semi-Riemannian manifold.  Thus, we let the fiber $\ff_x$ at a point
$x \in M$ be the space of linear isometries $\R^{p,q} \to \fake_x$.  The
group $\O(p,q)$ acts on each fiber in an obvious way, and we can check
that these fibers fit together to form the total space $\ff$ of a
principal $\O(p,q)$ bundle over $M$.

Next, suppose we have a coframe field
\[               e\maps TM \stackrel{\sim}{\longrightarrow} \fake  .\]
This allows us to pull back the inner product on $\fake$ and get 
an inner product on the tangent bundle, making $M$ into a semi-Riemannian 
manifold with a metric $g$ given by
\[            g(v,w) = \eta(e(v), e(w)) .\]

Next, suppose we have a flat connection $\om$ on $\ff$.  This determines
a flat metric-preserving connection on $\fake$, say $D$.  We can then
pull this back using the coframe field to obtain the Weitzenb\"ock
connection $\D$ on $TM$.  Explicitly, the \define{Weitzenb\"ock connection} is 
determined by requiring that
\[             e( \D_v w) = D_v (e(w)) \]
for all vector fields $v,w$ on $M$.  We note that this condition is invariant
under gauge transformations of $\ff$. For, such a transformation gives a 
section $g$ of ${\rm End}(\fake)$, acting on the coframe by 
$e \mapsto ge$ and commuting with the connection $D$, so that:
\[
     ge( \D_v w) = g\,D_v (e(w)) =  D_v (ge(w)).
\]
Hence the Weitzenb\"ock connection itself is gauge invariant.

The torsion of the Weitzenb\"ock connection is the $TM$-valued 2-form
given by
\[
     T(v,w) = \D_v w -\D_w v -[v,w]
\]
On the other hand, the coframe field $e$ can be seen as a $\fake$-valued
1-form, so its covariant exterior derivative is a $\fake$-valued 2-form 
$d_\om e$.  Crucially, this is just the torsion in disguise.  More precisely,
we can translate between $d_\om e$ and $T$ using the coframe field:
\[        e(T(v,w)) = (d_\om e)(v,w)   \]
for all vector fields $v,w$ on $M$.  To see this recall that the covariant 
exterior derivative $d_\om$ is defined just like the ordinary differential 
$d$, but where all directional derivatives are replaced 
by covariant derivatives.  Hence:
\begin{align}
\nonumber
(d_\om e)(v,w) :&=D_v(e(w)) - D_w(e(v)) - e([v,w]) 
\\
\label{internal-torsion}
&=e\left(\D_v w - \D_w v - [v,w]\right)
\\
\nonumber
&= e(T(v,w)). 
\end{align}

There is also a simple relationship between the notions of torsion and
contorsion.  If $\tilde \om$ is the torsion-free connection for the
coframe field $e$, then we find
\begin{align*}
d_\om e &= d_\om e - d_{\tilde \om}e 
\\
&=(de + \om\we e) - (de + \tilde \om \we e)
\\
&= K \we e
\end{align*}
where $K = \om - \tilde \om$ is the contorsion.  

\subsection{Poincar\'e 2-connections}
\label{Poincare 2-connections}

Now we reach our first main result, a relationship between
flat connections with torsion and Poincar\'e 2-connections.  As we saw
in Proposition \ref{2-frame}, any semi-Riemannian manifold comes
equipped with a principal $\Po(p,q)$ 2-bundle, its `2-frame 2-bundle'.
There is also a `fake' version of this construction when our manifold
$M$ is equipped with a fake tangent bundle $\fake \to M$:

\begin{defn} 
If $\fake \to M$ is a fake tangent bundle with a metric of signature
$(p,q)$, and $\ff \to M$ is its fake frame bundle, then we define the
\define{fake 2-frame 2-bundle} to be the principal $\Po(p,q)$ 
2-bundle $\tff \to M$ where
\[   \tff = \ff \times_{\O(p,q)} \Po(p,q)  . \]
\end{defn}  

As we have seen, the raw ingredients of teleparallel gravity are a
flat connection $\om$ on the fake frame bundle, together with a
$\fake$-valued 1-form $e$.  The torsion, which plays a crucial role in
teleparallel gravity, can then be reinterpreted as the $\fake$-valued
2-form $d_\om e$.  Our result is that the pair $(\om, d_\om e)$ is then
a flat 2-connection on the fake 2-frame 2-bundle.  Conversely, if some
topological conditions hold, every such flat 2-connection arises this 
way:

\begin{thm}
\label{thm:main}
Suppose $\fake \to M$ is a fake tangent bundle with a metric of signature
$(p,q)$.   If $\om$ is a flat connection on the fake frame bundle of $M$
and $e$ is a $\fake$-valued 1-form, then $(\om, d_\om e)$ is a flat 
2-connection on the fake 2-frame 2-bundle $\tff$.  Conversely, if $M$ is simply 
connected and has vanishing 2nd de Rham cohomology, every flat 
2-connection on $\tff$ arises in this way.
\end{thm}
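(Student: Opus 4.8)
The plan is to reduce everything to Proposition~\ref{global-2conn}, which identifies a strict 2-connection on $\tff$ with a pair $(\om,B)$ consisting of a connection $\om$ on the fake frame bundle $\ff$ together with a 2-form $B$ valued in the associated bundle $\ff\times_{\O(p,q)}\h$. For the Poincar\'e 2-group $\h=\R^{p,q}$, and this associated bundle is precisely $\fake$, so $B$ is a $\fake$-valued 2-form. Since the crossed module has $\dt=0$, the fake flatness condition $F=\dt(B)$ collapses to $F=0$; that is, \emph{every} 2-connection on $\tff$ automatically has flat $\om$. The additional requirement for such a 2-connection to be \emph{flat}, in the sense of Section~\ref{curvature}, is that the 2-curvature $G=d_\om B$ vanish.

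For the forward direction I would set $B=d_\om e$. This is a $\fake$-valued 2-form, and fake flatness holds since $\om$ is assumed flat. The 2-curvature is then $G=d_\om(d_\om e)$. Using the standard identity expressing the square of the covariant exterior derivative through the curvature, $d_\om^2 e=\dalpha(F)\we e$, together with $F=0$, we obtain $G=0$. Hence $(\om,d_\om e)$ is a flat 2-connection.

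For the converse I would begin with a flat 2-connection $(\om,B)$, so that $\om$ is flat, $B$ is a $\fake$-valued 2-form, and $d_\om B=0$. The crucial observation is that flatness of $\om$ makes $\fake$ into a \emph{flat} vector bundle, with $d_\om$ its associated twisted de Rham differential; the $d_\om$-closed form $B$ then represents a class in the twisted cohomology $H^2(M;\fake)$. Since $M$ is simply connected, the monodromy of this flat bundle is trivial, so $\fake$ is isomorphic, as a flat bundle, to the trivial bundle $M\times\R^{p,q}$ with its canonical flat connection. Under such a trivialization $d_\om$ becomes the ordinary exterior derivative acting componentwise, so $H^2(M;\fake)\iso H^2_{dR}(M)\tensor\R^{p,q}$.

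The vanishing of the 2nd de Rham cohomology then forces this class to vanish, yielding a $\fake$-valued 1-form $e$ with $B=d_\om e$ and exhibiting the flat 2-connection as arising from the stated construction. I expect the converse to be the main obstacle: the real work is justifying that simple-connectedness trivializes the local system $\fake$, reducing its twisted cohomology to ordinary de Rham cohomology, so that the hypothesis $H^2_{dR}(M)=0$ can supply the primitive $e$. The forward direction, by contrast, is essentially just the Bianchi-type identity $d_\om^2=0$ for a flat connection.
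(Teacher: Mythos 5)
Your proposal is correct and follows essentially the same route as the paper's own proof: both reduce to Proposition~\ref{global-2conn}, observe that $\dt=0$ turns fake flatness into flatness of $\om$, kill the 2-curvature via the identity $d_\om(d_\om e)=R\we e$ with $R=0$, and in the converse use simple connectivity to trivialize the flat bundle by parallel transport so that $d_\om$-cohomology reduces to ordinary de Rham cohomology valued in $\R^{p,q}$, where $H^2_{dR}(M)=0$ supplies the primitive $e$. The only difference is presentational: you package the trivialization as ``trivial monodromy of the local system $\fake$,'' while the paper routes through parallelizability of $M$ and gauge equivalence of all flat connections to the zero connection in a chosen trivialization---the underlying mechanism is identical.
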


\begin{proof}
By construction, $\tff$ has $\ff$ as its manifold of objects.  Thus,
by Proposition \ref{global-2conn}, a 2-connection on $\tff$ amounts to
a connection $\om$ on $\ff$ together with a 2-form on $M$ with values in
$\fake = (\ff \times_{\O(p,q)} \R^{p,q})$, satisfying fake flatness.
For the Poincar\'e 2-group, $\dt = 0$, so fake flatness simply means
the connection $\om$ is flat.  We thus get a 2-connection $(\om,d_\om e)$ on
$\tff$ from any flat connection $\om$ on $\ff$ and $\fake$-valued 1-form
$e$.

To see that the 2-connection $(\om, d_\om e)$ is flat, we must also check
that its 2-curvature vanishes.  This 2-curvature is just the exterior 
covariant derivative $d_\om (d_\om e)$.  However, by the Bianchi identity 
\[
     d_\om (d_\om e) = R \we e 
\]
where $R$ is the curvature of $\om$, and $R=0$ because $\om$ is flat.
It follows that the 2-connection $(\om, d_\om e)$ is flat.

For the converse, suppose we have any flat 2-connection $(A,B)$ on 
$\tff$; we will bring in the additional topological assumptions 
as we need them. As just discussed, the fake flatness condition 
implies that $A=\om$ is a {\em flat} connection on $\ff$.  
By Proposition~\ref{global-2conn}, $B$ is a $\fake$-valued 2-form, and 
since the 2-curvature vanishes, $B$ is covariantly closed:
\[
   d_\om B = 0.
\]
Now for any $\fake$-valued form $X$ we have $(d_\om)^2 X = R\we X = 0$,
since $\om$ is flat.  We thus get a cochain complex of $\fake$-valued forms:
\[
\xymatrix{
     \Om^0(M,\fake) \ar[r]^{d_\om} 
     & \Om^1(M,\fake) \ar[r]^{d_\om} 
     & \Om^2(M,\fake) \ar[r]^{\;\;\quad d_\om} 
     & \cdots
}
\]
Let us denote the cohomology of this complex by $H_\om^\bullet
(M,\fake)$.  If we had $H_\om^2 (M,\fake) = 0$, then $d_\om B= 0$ would
imply $B = d_\om e$ for some $e \in \om^1(M,\fake)$.  This would
complete the proof.

We now begin imposing topological conditions to guarantee that $H_\om^2
(M,\fake) = 0$.  First, suppose that (each component of) $M$ is simply
connected.  In this case, note that $FM$ has a flat connection if and
only if $M$ is parallelizable, meaning that the frame bundle
$FM$ is trivializable.  For, any trivialization of $FM$ determines a
unique connection such that parallel transport preserves this
trivialization.  And conversely, given a flat connection, $\pi_1(M)=
1$ implies parallel transport is completely path independent, so we
can trivialize $FM$ by parallel translation, starting from a frame at
one point in each component of $M$.

Since $\ff$ is isomorphic to $FM$, the same is true for it: if
$M$ is simply connected, $\ff$ admits a flat connection if and only
if $M$ is parallelizable.

Thus we may assume $M$ is parallelizable.  By trivializing the fake frame
bundle of $M$, we may think of the 2-form $B$ as taking values not in
$\fake$, but simply in $\R^{p,q}$.  We are thus reduced to a cochain
complex of the form
\[
\xymatrix{
     \Om^0(M,\R^{p,q}) \ar[r]^{d_\om} 
     & \Om^1(M,\R^{p,q}) \ar[r]^{d_\om} 
     & \Om^2(M,\R^{p,q}) \ar[r]^{\;\;\quad d_\om} 
     & \cdots
}
\]
and $B$ determines a class in $H^2_\om(M,\R^{p,q})$.

Gauge transformations act on the cohomology $H_\om^\bullet$ in a natural
way.  Having trivialized $\ff$, a gauge transformation may be written $\om'
= g\om g^{-1} +gdg^{-1}$.  Using this formula, it is easy to show that
\[
   d_{\om'} (gX) 
   = g(d_\om X).
\]
Thus, if $Y$ and $Y'$ are cohomologous for $\om$, then $Y- Y' = d_\om X$,
hence $gY-gY' = d_{\om'}(gX)$, so that $gY$ and $gY'$ are cohomologous
for ${\om'}$.  Since $M$ is simply connected, all flat connections on
$\ff$ are gauge equivalent.  In particular, they are all gauge
equivalent to the `zero connection' in our chosen trivialization of
$\ff$.

In this gauge, the covariant differential $d_\om$ becomes the ordinary
differential $d$, and the question is simply whether $dB = 0$ implies
$B = de$ for some $e$.  This is essentially a question of de Rham
cohomology: thinking of $B\in \Om^2(M,\R^{p,q})$ as a $(p+q)$-tuple
of 2-forms $B_1,\ldots, B_{p+q}$, it is clear that $B$ is exact
precisely when each $B_{i}$ is.  So, demanding now that the second de
Rham cohomology of $M$ vanishes, we get $B=de$ for some coframe field
$e$, in the chosen gauge.  Switching back to an arbitrary gauge, this
shows that our 2-connection is really of the form $(\om,d_\om e)$.
\end{proof}

In particular, we can take the fake tangent bundle of $M$ to be the
actual tangent bundle:

\begin{cor} 
\label{corollary}
Suppose $M$ is a manifold equipped with a metric $g$ of signature
$(p,q)$.  If $\omega$ is a flat metric-compatible connection on the
frame bundle of $M$, then the pair $(\omega,T)$, where $T$ is the
torsion of $\omega$, is a flat strict 2-connection on the 2-frame
2-bundle of $M$.
\end{cor}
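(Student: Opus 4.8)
The plan is to obtain this as a direct specialization of Theorem~\ref{thm:main}, taking the fake tangent bundle to be the genuine tangent bundle. Concretely, I would set $\fake = TM$ with its metric $\eta = g$. With this choice the fake frame bundle $\ff$ is literally the orthonormal frame bundle $FM$, and the fake 2-frame 2-bundle $\tff$ is exactly the 2-frame 2-bundle $\tfm$ of Definition~\ref{2FM}. Since $FM$ is a principal $\O(p,q)$ bundle, a flat metric-compatible connection on it is precisely a flat connection on $\ff$ in the sense required by the theorem (metric-compatibility being built in to an $\O(p,q)$ connection), so the hypotheses of Theorem~\ref{thm:main} are met with $\om = \omega$.

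The only substantive point is to supply the coframe field and check that the resulting $\fake$-valued 2-form $d_\om e$ is the torsion $T$. The natural choice is the \emph{identity} coframe field $e = \mathrm{id} \maps TM \to TM = \fake$, which is visibly a vector bundle isomorphism and pulls the metric $\eta = g$ back to $g$ itself, hence is a legitimate coframe field. Feeding this into equation~(\ref{internal-torsion}), which reads $e(T(v,w)) = (d_\om e)(v,w)$ for any coframe field, and setting $e = \mathrm{id}$, I get $T(v,w) = (d_\om e)(v,w)$; that is, $d_\om e = T$. Here $T$ is the torsion of the connection induced on $TM$ by $\om$, which under the identity coframe coincides with the Weitzenb\"ock connection and so is exactly the torsion of $\omega$ named in the corollary.

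Combining these observations, Theorem~\ref{thm:main} immediately yields that $(\om, d_\om e) = (\omega, T)$ is a flat 2-connection on $\tff = \tfm$, which is the assertion. I do not expect any genuine obstacle: the content of the corollary is entirely absorbed by the theorem, and the single real observation is that the canonical solder form $e = \mathrm{id}$ has $d_\om e$ equal to the torsion. The one bookkeeping item worth spelling out is the identification of the ``fake'' structures $\fake, \ff, \tff$ with their genuine counterparts $TM, FM, \tfm$ under $\fake = TM$, so that the flat 2-connection produced by the theorem really lives on the 2-frame 2-bundle of the corollary; note also that only the forward implication of the theorem is needed, so no topological hypotheses enter.
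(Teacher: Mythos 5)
Your proposal is correct and is essentially identical to the paper's own proof: both specialize Theorem~\ref{thm:main} by taking $\fake = TM$ with metric $g$, choose the identity coframe $e = \mathrm{id} \maps TM \to TM$ (so that $\D = D$), and use equation~(\ref{internal-torsion}) to conclude $d_\omega e = T$, whence the theorem gives the result. Your added remarks---that only the forward implication is used, so no topological hypotheses are needed, and that the fake structures coincide with the genuine ones---are accurate and match the paper's reasoning.
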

 
\begin{proof}  
The tangent bundle $TM$, equipped with the metric $g$, is a 
particular case of a fake tangent bundle on $M$, and the corresponding 
fake frame bundle is just the usual (orthonormal) frame bundle.  Taking 
the coframe field $e\maps TM\to TM$ to be the identity map, we 
have $\D = D$, so the calculation (\ref{internal-torsion}) simplifies
to
\[
   (d_\omega e)(v,w) = \D_v w - \D_w v - [v,w] = T(v,w).
\]
Thus $d_\omega e = T$, and by the previous theorem, $(\omega,T)$ is a
flat 2-connection on the fake 2-frame 2-bundle, which here is just
the original 2-frame 2-bundle of Def.~\ref{2FM}.
\end{proof}

While the converse in Theorem \ref{thm:main} assumes the manifold
is simply connected, this condition is not really necessary.   We only
introduced it to make it easy to check that the cohomology group
$H_\om^2(M,\fake)$ vanishes.  Our proof gives more:

\begin{cor}
\label{corollary2}
Suppose $\fake \to M$ is a fake tangent bundle with a metric of signature
$(p,q)$.   Suppose $(A,B)$ is a flat 2-connection on $\tfm$.  Then $A=\om$
is a flat connection on $\fake$, and if $H_\om^2(M,\fake) = 0$, then 
$B = d_\om e$ for some $\fake$-valued 1-form $e$ on $M$.
\end{cor}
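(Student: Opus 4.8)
The plan is to read off this corollary directly from the converse half of the proof of Theorem~\ref{thm:main}, simply stopping before the topological reductions are made. First I would invoke fake flatness. Since the infinitesimal crossed module of the Poincar\'e 2-group has $\dt = 0$, the fake flatness condition $F = \dt(B)$ reduces to $F = 0$, so $A = \om$ is forced to be a flat connection on $\ff$, inducing a flat connection on $\fake$. By Proposition~\ref{global-2conn}, the remaining datum $B$ is a 2-form on $M$ with values in the associated bundle $\fake$.

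Next I would use the flatness of the 2-connection $(A,B)$, which by definition means that both its curvature and its 2-curvature vanish. The curvature condition is already accounted for above; the 2-curvature is the exterior covariant derivative $d_\om B$, so its vanishing gives $d_\om B = 0$, i.e.\ $B$ is $d_\om$-closed. Here the flatness of $\om$ is doing the essential structural work: because $R = 0$, we have $(d_\om)^2 X = R \we X = 0$ for every $\fake$-valued form $X$, so $d_\om$ is a genuine differential and $\Om^\bullet(M,\fake)$ becomes a cochain complex with cohomology $H_\om^\bullet(M,\fake)$.

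Finally I would invoke the hypothesis $H_\om^2(M,\fake) = 0$: it says precisely that every $d_\om$-closed $\fake$-valued 2-form is $d_\om$-exact, so the closed form $B$ may be written $B = d_\om e$ for some $\fake$-valued 1-form $e$, completing the argument. There is no real obstacle to overcome here, and that is exactly the point of isolating the corollary: the simple-connectedness and vanishing second de Rham cohomology assumptions of Theorem~\ref{thm:main} entered only through the verification that this one cohomology group vanishes. Taking that vanishing as a direct hypothesis, the conclusion is immediate, and the only genuine insight---already supplied in proving the theorem---is the recognition that the obstruction to writing $B$ as $d_\om e$ lives in the twisted cohomology $H_\om^2(M,\fake)$, with flatness of $\om$ being exactly what makes $d_\om$ a differential in the first place.
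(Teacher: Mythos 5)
Your proposal is correct and follows exactly the paper's route: the paper proves this corollary simply by citing the converse portion of the proof of Theorem~\ref{thm:main}, where fake flatness (with $\dt=0$) forces $\om$ to be flat, vanishing 2-curvature gives $d_\om B=0$, flatness of $\om$ makes $d_\om$ a differential, and the hypothesis $H_\om^2(M,\fake)=0$ then yields $B=d_\om e$. You have also correctly identified that the simple-connectedness and de Rham assumptions in the theorem served only to guarantee $H_\om^2(M,\fake)=0$, which is precisely the observation the paper makes when stating this corollary.
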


\begin{proof} 
This was established in the proof of Theorem \ref{thm:main}. 
\end{proof}

\section{Teleparallel gravity and the teleparallel 2-group}
\label{teleparallel-2-group}

In the previous section, we showed that the Poincar\'e 2-group is
related to `teleparallel geometry'---the geometry of flat connections
with torsion.  However, the relevance to teleparallel {\em gravity} is
not yet clear.  The main field in the action for teleparallel gravity
is the {\em coframe field}, while the Poincar\'e 2-connection 
$(\om,d_\om e)$ seems to play more of a supporting role.    It
would be much more satisfying if we could view the coframe field in
terms of a 2-connection as well.  In fact we can do just this if we 
extend the Poincar\'e 2-group to a larger 2-group: the `teleparallel 
2-group'.  

\subsection{The teleparallel 2-group}

Let us start by defining this 2-group directly:
\begin{defn}
\label{defn:Tele}
The \define{teleparallel 2-group} is the Lie 2-group \define{$\Tel(p,q)$} 
for which:
\begin{itemize}
\item $\IO(p,q)$ is the Lie group of objects,
\item $\IO(p,q) \ltimes \R^{p,q}$ is the Lie group of morphisms, where
the semidirect product is defined using the following action:
\[      \begin{array}{ccl}
\alpha \maps \IO(p,q) \times \R^{p,q} &\to& \R^{p,q}   \\
              ((g,v), w) &\mapsto& g w  ,
\end{array}
\]
where $(g,v) \in \IO(p,q) = \O(p,q) \ltimes \R^{p,q}$ and 
$w \in \R^{p,q}$.
\item the source of the morphism $((g,v),w)$ is $(g,v)$,
\item the target of the morphism $((g,v),w)$ is $(g,v+w)$
\item the composite $((g,v),w) \circ ((g',v'),w')$, when defined, 
is $((g',v'),w+w')$
\end{itemize}
\end{defn}
\noindent Explicitly, the product in the group of morphisms is given by
\begin{align*}
  ((g,v), w)  ((g',v'), w') 
    &= ((gg',v+gv'),w+gw').
\end{align*}

The Lie crossed module corresponding to $\Tel(p,q)$ has:
\begin{itemize}
\item $G = \IO(p,q)$
\item $H = \R^{p,q}$
\item $t\maps \R^{p,q} \to \IO(p,q)$ the inclusion homomorphism
$v\mapsto(1,v)$
\item $\alpha(g,v)w = gw$.
\end{itemize}
\noindent The corresponding Lie 2-algebra, the \define{teleparallel
Lie 2-algebra}, has an infinitesimal crossed module with
\begin{itemize}
\item $\g = \io(p,q)$
\item $\h = \R^{p,q}$
\item $\dt\maps \R^{p,q} \to \io(p,q)$ the inclusion homomorphism
$v\mapsto(0,v)$
\item $\dalpha(\xi,v)w = \xi w$.
\end{itemize}

Another way to think about the teleparallel 2-group starts with the
Poincar\'e group.  Given any Lie group $G$ there is a Lie 2-group
$\E(G)$ with $G$ as its group of objects and a unique morphism between
any two objects.  This corresponds to the crossed module $1 \maps G
\to G$, where $G$ acts on itself by conjugation.  This Lie 2-group
plays an important role in topology, as first noted by Segal
\cite{BCSS,Segal}: it is closely related to the contractible $G$ space
$EG$ that is the total space of the universal principal $G$-bundle $EG
\to BG$.  However, what matters to us here is that teleparallel 2-group
is a sub-2-group of $\E(\IO(p,q))$.

\begin{defn} 
Given a Lie group $G$, define \define{$\E(G)$} to be the Lie 2-group 
for which:
\begin{itemize}
\item $G$ is the Lie group of objects,
\item $G \ltimes G$ is the Lie group of morphisms, where $G$ acts on
itself by conjugation,
\item the source of the morphism $(g,h) \in G \ltimes G$ is $g$,
\item the target of the morphism $(g,h) \in G \ltimes G$ is $h g$
\item the composite $(g,h) \circ (g',h')$, when defined, is $(g', h h')$ 
\end{itemize}
\end{defn}

It is worth noting that $G \ltimes G$ is isomorphic to $G \times G$.
However, the description of $\E(G)$ using a semidirect project makes
it easier to understand the inclusion
\[      i \maps \Tel(p,q) \to \E(\IO(p,q))  .\]
Namely, the inclusions at the object and morphism levels, which we
denote as $i_0$ and $i_1$, are the obvious ones:
\[    \begin{array}{cccl}
i_0 \maps & \IO(p,q) &\to& \IO(p,q)  \\
i_1 \maps & \IO(p,q) \ltimes \mathbb{R}^{p,q} &\to& \IO(p,q) \ltimes \IO(p,q) .
\end{array} \]

The 2-group $\E(G)$ can also be obtained from the crossed module
$(G,H,t,\alpha)$ for which $G$ and $H$ are the same group $G$, $t\maps
G\to G$ is the identity homomorphism, and $\alpha$ is the action
of $G$ on itself by conjugation.

There is also an inclusion of Poincar\'e 2-group in the teleparallel
2-group:
\[    j \maps \Po(p,q) \to \Tel(p,q)  \]
where the inclusions at the object and morphism levels: 
\[    \begin{array}{cccl}
j_0 \maps & \O(p,q) &\to& \IO(p,q)  \\
j_1 \maps & \O(p,q)\ltimes \R^{p,q} &\to& \IO(p,q) \ltimes \R^{p,q} 
\end{array} \]
are the obvious ones.  Here we have written the group of morphisms of
$\Po(p,q)$ as $ \O(p,q)\ltimes \R^{p,q}$ rather than $\IO(p,q)$ to
emphasize that $j_1$ maps the $\mathbb{R}^{p,q}$ into the second
factor of $\IO(p,q) \ltimes \R^{p,q}$, not
into the copy of $\mathbb{R}^{p,q}$ hiding in the first factor.

Remember that we can regard the group $\O(p,q)$ as a Lie 2-group whose
morphisms are all identities.  As pointed out by Urs Schreiber, this
2-group is equivalent in the 2-category of Lie 2-groups, though not
isomorphic, to $\Tel(p,q)$.  The obvious inclusion $\O(p,q) \to
\Tel(p,q)$ is an equivalence.  In `weak' higher gauge theory,
equivalent 2-groups are often interchangeable.  Nonetheless, $\O(p,q)$
and $\Tel(p,q)$ play different roles in our work.  This remains
somewhat mysterious and deserves further exploration.

The 2-groups that play a role in this paper can be summarized as follows.
Each is included in the next:
\[
\begin{array}{l|ccccc}
\G &  &\O(p,q) & \Po(p,q) & \Tel(p,q) & \E(\IO(p,q))  \\[.2em] \hline \\[-.8em]
\G_0  && \O(p,q) &\O(p,q) & \IO(p,q) & \IO(p,q) \\
\G_1  && \O(p,q) & \O(p,q)\ltimes \R^{p,q} & \IO(p,q)\ltimes \R^{p,q} & \IO(p,q)\ltimes \IO(p,q)
\end{array}
\]

\subsection{Poincar\'e connections}

Our next goal is to describe 2-connections on certain 2-bundles with
the teleparallel 2-group as gauge 2-group.  Such a 2-bundle has a
bundle of objects that is a principal Poincar\'e group bundle.
Thus, a 2-connection on a principal $\Tel(p,q)$ 2-bundle consists partly 
of a Poincar\'e connection.  It will therefore be helpful to have a 
few facts about Poincar\'e connections at hand.

First, if we pull the adjoint representation of the Poincar\'e 
group $\IO(p,q)$ back to the Lorentz group $\O(p,q)$, it splits 
into a direct sum 
\[
    \io(p,q) = \o(p,q) \oplus \R^{p,q}
\]
of irreducible $\O(p,q)$ representations.  Since a connection $A$ on a
principal $\IO(p,q)$ bundle can locally be described by an $\io(p,q)$-valued
1-form, we can split it as:
\[
A = \om + e,
\]
where $\om$ is $\o(p,q)$-valued and $e$ is $\R^{p,q}$-valued.  Note
that $\om$ can be seen as a Lorentz connection, while $e$ can be seen
as a coframe field, at least when it restricts to an isomorphism on
each tangent space.  It is also easy to check that the same direct sum
of representations splits the curvature of $A$, $F = dA + A \we A$,
into two parts as follows:
\[
F = R + d_\om e.
\]
Here the $\o(p,q)$-valued part 
\[    R = d\om + \om \wedge \om  \]
is the curvature of $\om$, while the $\R^{p,q}$-valued part 
is the torsion
$   d_\om e $.

To see how this all works globally, it helps to note that the coframe
field, defined in Section \ref{tel-grav} as an isomorphism $TM \to
\fake$, can equivalently be viewed as a certain kind of 1-form on
$\ff$:
\begin{lemma}
\label{coframe-lemma}
Let $\fake$ be a fake tangent bundle on $M$, and $p\maps \ff\to M$ the
corresponding fake frame bundle.  Then there is a canonical one-to-one
correspondence between:
\begin{itemize}
\item vector bundle morphisms $e\maps TM \to \fake$, and
\item $\R^{p,q}$-valued 1-forms $\varepsilon$ on $\ff$ that are:
 \begin{itemize}
 \item horizontal: $\varepsilon$ vanishes on $\ker(dp)$
 \item $\O(p,q)$-equivariant: $R^*_h \varepsilon = h^{-1} \circ
 \varepsilon$ for all $h\in \O(p,q)$.
 \end{itemize}
\end{itemize}
Moreover, the first of these is an isomorphism precisely when the
second is nondegenerate, meaning that each restriction $\varepsilon
\maps T_f\ff \to \R^{p,q}$ has rank $p+q$.

\end{lemma}

\begin{proof}
Suppose $e\maps TM \to \fake$ is a vector bundle morphism.  Let $\pi$
be the projection map for the tangent bundle of $\ff$:
\[
    \pi\maps  T\ff \to \ff.
\]
Then we can define $\tilde e \maps T\ff \to \R^{p,q}$, an
$\R^{p,q}$-valued 1-form on $\ff$, as:
\[
   \tilde e(v) = \pi(v)^{-1}\big(e\circ dp(v)\big)\qquad \forall v\in T\ff.
\]
where $dp\maps T\ff \to TM$ is the differential of the fake frame
bundle $p\maps \ff\to M$.  In this formula we are using the fact that
$\pi(v)\in\ff$ is itself an isomorphism $\R^{p,q} \to
\fake_{p(\pi(v))}$, by the definition of the fake frame bundle in
Section \ref{torsion}, and taking the inverse of this.  This $\tilde
e$ is horizontal, since it obviously vanishes on the kernel of $dp$.
All that remains to check is equivariance.  This follows simply from
equivariance of $\pi \maps T\ff \to \ff$ and $dp\maps T\ff \to TM$,
where the action $\O(p,q)$ on $TM$ is trivial, induced from the
trivial action on $M$:
\begin{align*}
R^*_h\tilde e(v) = 
\tilde e(R_{h*} v) &= \pi(R_{h*} v)^{-1}\big(e\circ dp(R_{h*} v)\big) \\ 
&= h^{-1} \circ \pi(v)^{-1} (e\circ dp(v)) \\ 
&= h^{-1} \circ \tilde e(v) .
\end{align*}

Conversely, suppose we are given a horizontal and
equivariant 1-form $\varepsilon\maps T\ff \to \R^{p,q}$, and let us
construct a vector bundle morphism $\bar \varepsilon\maps TM \to
\fake$.  Given $w \in T_xM$, pick any $f\in \ff_x$ and $v\in T_f\ff$.
Since $\varepsilon$ is horizontal, $\varepsilon(v)$ does not depend on
which $v\in T_f\ff$ we pick; it does depend on the point $f$ in the
fiber, but this dependence is $\O(p,q)$-equivariant.  Using this, it
is easy to check that demanding
\[
     \bar \varepsilon(w) = \pi(v)\big(\varepsilon(v)\big) \quad
     \forall w\in TM, v\in T\ff \text{ with } dp(v)=w
\]
uniquely determines a vector bundle morphism 
$\bar \varepsilon \maps TM \to \fake$.  

It is also easy to see that these processes are inverse, namely that
$\bar {\tilde e} = e$ and $\tilde {\bar \varepsilon} = \varepsilon$.
Finally, it is easy to check that the vector bundle morphism $e\maps
TM \to \fake$ is an isomorphism precisely when the corresponding
horizontal equivariant 1-form $\varepsilon \maps T\ff \to \R^{p,q}$ is
nondegenerate.
\end{proof}

Every manifold equipped with a fake tangent bundle $\fake \to M$ has a
principal Poincar\'e group bundle over it.  To build this, we start
with the fake frame bundle $\ff \to M$ as described in Section
\ref{torsion}.  This is a principal Lorentz group bundle.  We then
extend this to a principal Poincar\'e group bundle over $M$, the
\define{extended fake frame bundle $I\ff$}:  
\[
     I\ff := \ff \times_{\O(p,q)} \IO(p,q).    
\]
A connection on the extended fake frame bundle is an $\io(p,q)$-valued 
1-form $A$ on the total space $I\ff$ satisfying the usual equations.
But such connections have a more intuitive description:

\begin{prop}
\label{iopq-conn}
There is a canonical $\O(p,q)$-equivariant correspondence between the
following kinds of data:
\begin{itemize}
\item a connection $A$ on the extended fake frame bundle $I\ff$, with
nondegenerate $\R^{p,q}$ part;
\item a connection $\om$ on the fake frame bundle $\ff$ together with
a coframe field $e$.
\end{itemize}
\end{prop}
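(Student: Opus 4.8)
The plan is to recognize the correspondence as an instance of the standard relationship between connections on a principal bundle and connections on a bundle obtained by extending the structure group, organized by the $\Ad(\O(p,q))$-invariant splitting $\io(p,q) = \o(p,q) \oplus \R^{p,q}$ recalled above, and then to trade the resulting $\R^{p,q}$-valued tensorial $1$-form for a coframe field using Lemma \ref{coframe-lemma}. The first thing I would record is that the fake frame bundle sits inside its extension as a reduction of the structure group: the map $\iota \maps \ff \to I\ff$, $f \mapsto [f,1]$, is a smooth embedding with $\iota(f\cdot h) = \iota(f)\cdot h$ for all $h \in \O(p,q)$, and every point of $I\ff$ has the form $\iota(f)\cdot x$ for some $f \in \ff$ and $x \in \IO(p,q)$, the only ambiguity being $\iota(f)\cdot x = \iota(f\cdot h)\cdot(h^{-1}x)$ with $h \in \O(p,q)$. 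Since $\O(p,q) \subseteq \IO(p,q)$ preserves both summands of $\io(p,q) = \o(p,q)\oplus\R^{p,q}$ under the adjoint action---acting by $\Ad$ on $\o(p,q)$ and by the defining representation on $\R^{p,q}$---this splitting is precisely the reductive decomposition adapted to the reduction $\iota$.

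For the forward direction, given a connection $A$ on $I\ff$, I would pull it back along $\iota$ to get an $\io(p,q)$-valued $1$-form $\iota^*A$ on $\ff$ and decompose it through the splitting as $\iota^*A = \om + \varepsilon$, with $\om$ the $\o(p,q)$-part and $\varepsilon$ the $\R^{p,q}$-part. That $\om$ is a connection on $\ff$ is then routine: the two connection axioms for $A$ restrict to those for $\om$ because the fundamental vector fields of the $\O(p,q)$-action are $\o(p,q)$-valued and $\Ad(\O(p,q))$ preserves $\o(p,q)$. For the same two reasons, $\varepsilon$ is horizontal---on vertical $\O(p,q)$-directions $A$ takes values in $\o(p,q)$, so its $\R^{p,q}$-part vanishes---and $\O(p,q)$-equivariant with $R_h^*\varepsilon = h^{-1}\varepsilon$, since $\Ad(h)$ acts on $\R^{p,q}$ as $h$. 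Thus $\varepsilon$ is exactly the kind of form in Lemma \ref{coframe-lemma}, and I would invoke that lemma to convert it into a vector bundle morphism $e \maps TM \to \fake$, with nondegeneracy of the $\R^{p,q}$-part of $A$ translating precisely into $e$ being an isomorphism, i.e.\ a coframe field.

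For the converse I would start from a connection $\om$ on $\ff$ and a coframe field $e$, pass to the corresponding nondegenerate horizontal equivariant form $\varepsilon$ via Lemma \ref{coframe-lemma}, and reconstruct $A$. On the image of $\iota$ I set $\iota^*A := \om + \varepsilon$, and extend to all of $I\ff$ by forcing $\IO(p,q)$-equivariance, i.e.\ in the coordinates $q = \iota(f)\cdot x$,
\[
   A_{\iota(f)\cdot x} = \Ad(x^{-1})\big((\om+\varepsilon)_f\big) + x^*\theta,
\]
where $\theta$ is the Maurer--Cartan form of $\IO(p,q)$, so that the $\om+\varepsilon$ term acts on the $f$-variation and $\theta$ on the group variation. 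The main obstacle is checking that this prescription is well defined, smooth, and genuinely a connection: well-definedness requires that replacing $(f,x)$ by $(f\cdot h, h^{-1}x)$ leave the formula unchanged, which is exactly where the $\Ad(\O(p,q))$-invariance of the splitting and the equivariance of $\om$ and $\varepsilon$ enter, while the connection axioms on $I\ff$ follow from the equivariance built into the formula together with the fact that $\om+\varepsilon$ reproduces fundamental vector fields in the $\o(p,q)$-directions and $\varepsilon$ is horizontal. Finally I would verify that the two constructions are mutually inverse---immediate once one checks that $\iota^*$ of the reconstructed $A$ returns $\om+\varepsilon$---and that the whole correspondence intertwines the natural $\O(p,q)$-actions on both sides, giving the asserted $\O(p,q)$-equivariant bijection.
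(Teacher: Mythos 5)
Your proposal is correct and takes essentially the same approach as the paper's proof: both pull the connection back along the inclusion $\ff \hookrightarrow I\ff$, split it using the $\O(p,q)$-invariant decomposition $\io(p,q) = \o(p,q) \oplus \R^{p,q}$, trade the $\R^{p,q}$-valued horizontal equivariant piece for a coframe field via Lemma \ref{coframe-lemma}, and recover $A$ in the converse direction by unique equivariant extension from $\ff$ to the associated bundle $I\ff$. The only difference is one of detail: you write out the extension explicitly in terms of the Maurer--Cartan form and verify well-definedness under $(f,x) \mapsto (f\cdot h, h^{-1}x)$, whereas the paper simply asserts that the $\io(p,q)$-valued 1-form on $\ff$ has a unique equivariant extension to $I\ff$.
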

\begin{proof}
A connection $A$ on $I\ff\to M$ is an $\io(p,q)$-valued 1-form on the
total space satisfying the usual equivariance properties under
$\IO(p,q)$.  This connection pulls back to a 1-form on $\ff$, and the
direct sum of $\O(p,q)$ representations $\io(p,q) = \o(p,q) \oplus
\R^{p,q}$ splits this into 1-forms $\om$ and $e$ with values in
$\o(p,q)$ and $\R^{p,q}$.  One can check that $\om$ is a connection on
$\ff$ and that when $e$ nondegenerate, it is equivalent to a coframe
field, via Lemma~\ref{coframe-lemma}.

Conversely, a connection $\om$ on $\ff$ and a coframe field, viewed as
an equivariant 1-form $e:T\ff \to \R^{p,q}$, assemble into an
$\io(p,q)$-valued 1-form on $\ff$, thanks to our direct sum of
$\O(p,q)$ representations.  This 1-form has a unique equivariant
extension to the associated bundle $I\ff=\ff
\times_{\O(p,q)}\IO(p,q)$, and this is a connection $A$.  The
$\R^{p,q}$ part of $A$ is nondegenerate because $e$ is.

This correspondence is clearly equivariant under gauge transformations
of the principal $\O(p,q)$ bundle $\ff$.
\end{proof}

It follows that the local description of the curvature of an
$\IO(p,q)$ connection, given at the beginning of this section, holds
globally as well:
 
\begin{prop}
\label{iopq-curv}
Let $A$ be a connection on the extended fake frame bundle $\I\ff$
described by a connection $\om$ on $\ff$ and a coframe field $e$.
Then the curvature $F$ of $A$ consists of the curvature $R = d\om
+ \om \we \om$ of $\om$ and the torsion $d_\om e$.
\end{prop}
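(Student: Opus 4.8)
The plan is to reduce the global statement to the local computation sketched at the start of this section, using the equivariant correspondence of Proposition~\ref{iopq-conn}. Recall from that proposition that a connection $A$ on $\I\ff$ is determined by its pullback along the canonical inclusion $\ff \hookrightarrow \I\ff$, and that this pullback splits, via the $\O(p,q)$-equivariant decomposition $\io(p,q) = \o(p,q) \oplus \R^{p,q}$, as $\om + e$. Since the curvature $2$-form is natural under pullback, and since $A$ and its curvature $F$ are $\IO(p,q)$-equivariant and so are determined by their restriction to the image of $\ff$, it suffices to compute the pullback of $F = dA + A \we A$ to $\ff$ and decompose it using the same splitting of $\io(p,q)$.

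The heart of the matter is then a purely algebraic computation in $\io(p,q) = \o(p,q) \ltimes \R^{p,q}$. First I would record the bracket relations dictated by the semidirect product: $[\o(p,q),\o(p,q)] \subseteq \o(p,q)$, $[\o(p,q),\R^{p,q}] \subseteq \R^{p,q}$, and $[\R^{p,q},\R^{p,q}] = 0$, the last because translations commute. Writing the pullback as $\om + e$ and expanding $F = d(\om+e) + \half[\om+e,\,\om+e]$, the term $\half[e,e]$ vanishes and the two cross terms coincide, since $[\om,e] = [e,\om]$ for Lie-algebra-valued $1$-forms. Collecting the $\o(p,q)$-valued and $\R^{p,q}$-valued pieces yields
\[
    F = \bigl(d\om + \om \we \om\bigr) + \bigl(de + [\om,e]\bigr).
\]
The first summand is precisely $R$, the curvature of $\om$. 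For the second, I would note that $[\om,e] = \dalpha(\om)\we e$, because the bracket of $\o(p,q)$ with $\R^{p,q}$ is exactly the action $\dalpha$; hence $de + [\om,e] = d_\om e$, the covariant exterior derivative, which by (\ref{internal-torsion}) is the torsion.

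Finally I would check that this decomposition is globally meaningful and not merely an artifact of working on $\ff$. Each summand inherits horizontality and $\O(p,q)$-equivariance from $\om$ and $e$, as established in Proposition~\ref{iopq-conn}, together with the $\O(p,q)$-invariance of the splitting $\io(p,q) = \o(p,q)\oplus\R^{p,q}$. Consequently $R$ descends to a $2$-form with values in the adjoint bundle $\ff\times_{\O(p,q)}\o(p,q)$, while $d_\om e$ descends to a $\fake$-valued $2$-form, as claimed. I expect the only delicate point to be this equivariance bookkeeping: the algebra is routine, so the main obstacle is stating the passage from $\ff$ back down to $M$ cleanly rather than any genuine computational difficulty.
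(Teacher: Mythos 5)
Your proposal is correct and follows essentially the same route as the paper, which states this proposition without a separate proof, presenting it as the globalization (via the equivariant correspondence of Proposition~\ref{iopq-conn}) of the local splitting $F = R + d_\om e$ asserted at the start of the section. Your write-up simply supplies the details the paper leaves implicit: the semidirect-product bracket computation giving $F = (d\om + \om\we\om) + (de + \dalpha(\om)\we e)$, and the equivariance bookkeeping that lets each summand descend to $M$.
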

 
\subsection{Teleparallel 2-connections} 

We now have all the ingredients we need to build $\Tel(p,q)$
2-connections.  The first step is to note that any manifold $M$
equipped with a fake tangent bundle has a principal $\Tel(p,q)$
2-bundle over it.  We can build this by forming the fake frame bundle
and then extending its gauge 2-group from the Lorentz group to the
teleparallel 2-group:

\begin{defn} 
If $\fake \to M$ is a fake tangent bundle with a metric of signature
$(p,q)$, and $\ff \to M$ is its fake frame bundle, then we define the
\define{teleparallel 2-bundle} to be the principal $\Tel(p,q)$ 
2-bundle $\Tel(\ff) \to M$ where
\[   \Tel(\ff) =  \ff \times_{\O(p,q)} \Tel(p,q).\]
\end{defn}  

\noindent Alternatively, we can start with the fake 2-frame 2-bundle 
and extend its gauge 2-group from the Poincar\'e 2-group to
the teleparallel 2-group.  This gives a 2-bundle
\[  
\tff \times_{\Po(p,q)} \Tel(p,q)
\] 
which is canonically isomorphic to $\Tel(\ff)$.  So, we are free to
also think of this as the teleparallel 2-bundle.

To describe 2-connections on the teleparallel 2-bundle, it is perhaps
easiest to go one step further and extend the gauge 2-group
all the way to $\E(\IO(p,q))$.  More generally, one can describe $\E(G)$
2-connections for an arbitrary Lie group $G$:
\begin{prop}
\label{e(g)-conn}
Let $P$ be a principal $G$-bundle, $\E(P)=P \times_G \E(G)$ the
associated $\E(G)$ 2-bundle.  For each connection $A$ on $P$ there is
a unique strict 2-connection on $\E(P)$ whose 1-form part is $A$; the
2-form part is the curvature $F= dA + A\we A$.
\end{prop}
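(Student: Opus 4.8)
The plan is to reduce the statement to the global description of strict 2-connections in Proposition~\ref{global-2conn}, and then to exploit the fact that the crossed module defining $\E(G)$ has $\dt$ equal to the identity, which collapses the fake flatness condition into an equation that pins down the 2-form part uniquely.

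First I would record the infinitesimal crossed module of $\E(G)$. Since $\E(G)$ comes from the crossed module in which $G$ and $H$ are the same group, $t \maps G \to G$ is the identity homomorphism, and $\alpha$ is the conjugation action, differentiating gives $\g = \h$, $\dt = \mathrm{id}_\g$, and $\dalpha = \ad$. The only structural input we really need is that $\dt$ is the identity.

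Next I would apply Proposition~\ref{global-2conn} to the bundle $\E(P) \to M$, whose bundle of objects is $\E(P)_0 = P$. That proposition says a strict 2-connection on $\E(P)$ is the same as a connection $A$ on $P$ together with a 2-form $B$ valued in the associated bundle $P \times_G \h$, subject to the fake flatness condition $F = \dt(B)$, where $F = dA + A \we A$. Because $\dalpha = \ad$ and $\h = \g$, the bundle $P \times_G \h$ is precisely the adjoint bundle $P \times_G \g$ in which the curvature $F$ takes values, and the bundle map induced by $\dt = \mathrm{id}_\g$ is the identity of this adjoint bundle. Hence fake flatness reads simply $F = B$.

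This settles existence and uniqueness in one stroke: the choice $B = F$ does satisfy fake flatness, since $\dt(F) = F$, and conversely any admissible $B$ must equal $F$. There is no genuine obstacle here; the whole content of the proposition is the observation that $\dt = \mathrm{id}$. The only point requiring care is the clean identification of $P \times_G \h$ with the adjoint bundle $P \times_G \g$, together with the verification that $\dt$ induces the identity map between them, after which the conclusion is immediate.
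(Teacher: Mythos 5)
Your proposal is correct and follows essentially the same route as the paper's own (much terser) proof: both identify the bundle of objects of $\E(P)$ with $P$ via Proposition~\ref{global-2conn}, and both observe that since $\dt$ is the identity in the infinitesimal crossed module of $\E(G)$, fake flatness collapses to $B = F$, which gives existence and uniqueness at once. Your extra care in identifying $P \times_G \h$ with the adjoint bundle $P \times_G \g$ and checking that $\dt$ induces the identity bundle map is a detail the paper leaves implicit, but it is the same argument.
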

\begin{proof}
The bundle of objects of $\E(P)$ is $P$, so the 1-form part of a
2-connection $(A,B)$ is a connection on $P$.  The fake flatness
condition in this case simply says $B=F$, since `$\dt$' in the
infinitesimal crossed module is the identity map.
\end{proof}

As a corollary, an $\E(\IO(p,q))$ 2-connection consists of four
parts: connection, coframe, curvature, and torsion.  However, the first 
two parts determine the others:
\begin{prop}
Let $\ff$ be a fake frame bundle.  A 2-connection on
$\ff\times_{\O(p,q)} \E(\IO(p,q))$ is specified uniquely by a
connection $\om$ on $\ff$ and a $\fake$-valued 1-form $e$.  The 2-form
part of this 2-connection consists of the curvature $R= d\om +
\om\we\om$ together with the torsion $d_\om e$.
\end{prop}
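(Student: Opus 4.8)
The plan is to recognize the 2-bundle $\ff \times_{\O(p,q)} \E(\IO(p,q))$ as an instance of the construction $\E(P)$ from Proposition \ref{e(g)-conn}, with $P$ the extended fake frame bundle, and then to translate the resulting data using Propositions \ref{iopq-conn} and \ref{iopq-curv}. Concretely, I would assemble three bijections whose composite is the claimed correspondence: (2-connections on the given 2-bundle) $\leftrightarrow$ (connections on $I\ff$) $\leftrightarrow$ (pairs $(\om,e)$), each step being an already-established equivalence, and then track the 2-form part along the way.

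First I would identify the 2-bundle explicitly. The bundle of objects of $\E(\IO(p,q))$ is $\IO(p,q)$, so the bundle of objects of $\ff \times_{\O(p,q)} \E(\IO(p,q))$ is $\ff \times_{\O(p,q)} \IO(p,q) = I\ff$. More is true: by the standard iterated--associated--bundle identity, for the subgroup $\O(p,q) \hookrightarrow \IO(p,q)$ and any left $\IO(p,q)$ 2-space $\F$ one has
\[
(\ff \times_{\O(p,q)} \IO(p,q)) \times_{\IO(p,q)} \F \;\cong\; \ff \times_{\O(p,q)} \F ,
\]
where on the right $\O(p,q)$ acts through the inclusion. Taking $\F = \E(\IO(p,q))$ gives
\[
\ff \times_{\O(p,q)} \E(\IO(p,q)) \;\cong\; I\ff \times_{\IO(p,q)} \E(\IO(p,q)) \;=\; \E(I\ff)
\]
as principal $\E(\IO(p,q))$ 2-bundles.

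With this identification in hand I would apply Proposition \ref{e(g)-conn} to $P = I\ff$ and $G = \IO(p,q)$: a strict 2-connection on $\E(I\ff)$ is uniquely specified by a connection $A$ on $I\ff$, and its 2-form part is the curvature $F = dA + A\we A$. Next, Proposition \ref{iopq-conn} provides a canonical correspondence between connections $A$ on $I\ff$ and pairs consisting of a connection $\om$ on $\ff$ together with a $\fake$-valued 1-form $e$; here I would note that the nondegeneracy hypothesis of that proposition is used only to promote $e$ to a genuine coframe field, so dropping it leaves exactly the bijection with arbitrary $\fake$-valued 1-forms $e$ via Lemma \ref{coframe-lemma}. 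Finally, Proposition \ref{iopq-curv} splits the curvature $F$ of $A$ into the curvature $R = d\om + \om\we\om$ of $\om$ and the torsion $d_\om e$. Composing the three bijections yields the unique specification by $(\om,e)$, and chasing the 2-form part through them identifies it with the pair $(R, d_\om e)$, as claimed.

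The main obstacle I anticipate is the first step: verifying that the displayed iterated--associated--bundle identification is an isomorphism of \emph{principal $\E(\IO(p,q))$ 2-bundles}, not merely of the underlying Lie groupoids. One must check the identity componentwise on objects and on morphisms and confirm that it intertwines source, target, identities, composition, and the right $\E(\IO(p,q))$-action; once that is in place, everything else reduces to citing the three propositions above. This verification is routine, but it is the one place where the 2-categorical bookkeeping must actually be carried out rather than deferred to an earlier result.
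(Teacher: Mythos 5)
Your proposal is correct and takes essentially the same approach as the paper: identify $\ff\times_{\O(p,q)} \E(\IO(p,q))$ with $\E(\I\ff)$, apply Proposition~\ref{e(g)-conn} to the extended fake frame bundle, and then translate via Propositions~\ref{iopq-conn} and \ref{iopq-curv}. The only difference is that you spell out two details the paper leaves implicit---the iterated associated-bundle identity behind the canonical isomorphism with $\E(\I\ff)$, and the observation that the nondegeneracy hypothesis of Proposition~\ref{iopq-conn} can be dropped to match arbitrary $\fake$-valued 1-forms---both of which are accurate and harmless elaborations.
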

\begin{proof}
$\ff\times_{\O(p,q)} \E(\IO(p,q))$ is canonically isomorphic to
$\E(\I\ff)$, where $\I\ff$ is the extended fake frame bundle.  In light
of Propositions~\ref{iopq-conn} and \ref{iopq-curv}, the result then
follows immediately from Proposition~\ref{e(g)-conn}.
\end{proof}

Now consider $\Tel(p,q)\subseteq \E(\IO(p,q))$.  A $\Tel(p,q)$
2-connection can be thought of as a special case of an $\E(\IO(p,q))$
2-connection.  We can still interpret it as an $\O(p,q)$ connection
$\om$ and coframe $e$, together with the torsion $d_\om e$.  However,
fake flatness now implies that $R = d\om + \om\we\om = 0$, so $\om$
must be a {\em flat} connection.  Summarizing:

\begin{thm}
\label{telpq-conn}
Let $M$ be a manifold equipped with a fake tangent bundle. 
A 2-connection on the $\Tel(p,q)$ 2-bundle
\[
    \ff \times_{\O(p,q)} \Tel(p,q)
\]   
consists of:   
\begin{itemize}
\item a {\em flat} connection $\om$ on $\ff$
\item a $\fake$-valued 1-form $e$, and 
\item the $\fake$-valued 2-form $d_\om e$.
\end{itemize}
\end{thm}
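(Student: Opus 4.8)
The plan is to apply Proposition \ref{global-2conn} directly to the teleparallel 2-bundle and then unpack the fake flatness condition using the splitting of Poincar\'e connections from Propositions \ref{iopq-conn} and \ref{iopq-curv}. First I would identify the bundle of objects: since the group of objects of $\Tel(p,q)$ is $\IO(p,q)$, the bundle of objects of $\ff \times_{\O(p,q)} \Tel(p,q)$ is $\ff \times_{\O(p,q)} \IO(p,q) = \I\ff$, the extended fake frame bundle. Proposition \ref{global-2conn} then tells us that a strict 2-connection is a connection $A$ on $\I\ff$ together with a 2-form $B$ valued in the associated bundle for the $\h = \R^{p,q}$ action, subject to $F = \dt(B)$. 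Because $\IO(p,q)$ acts on $\R^{p,q}$ only through its Lorentz part, this associated bundle is simply $\fake$, so $B$ is a $\fake$-valued 2-form.

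Next I would decompose $A$. By the correspondence of Proposition \ref{iopq-conn}---applied in the general form of Lemma \ref{coframe-lemma}, so that $e$ may be any vector bundle morphism $TM \to \fake$ rather than an isomorphism---the connection $A$ is equivalent to a connection $\om$ on $\ff$ together with a $\fake$-valued 1-form $e$. Proposition \ref{iopq-curv} then splits its curvature as $F = R + d_\om e$, with $R = d\om + \om \we \om$ the $\o(p,q)$-valued part and $d_\om e$ the $\R^{p,q}$-valued part. This splitting is valid globally because we pull $A$ back to $\ff$, where the structure group is $\O(p,q)$, under which $\io(p,q)$ decomposes as $\o(p,q) \oplus \R^{p,q}$.

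The key step is then to read off fake flatness. For the teleparallel crossed module, $\dt \maps \R^{p,q} \to \io(p,q)$ is the inclusion $v \mapsto (0,v)$ into the translation summand, so $\dt(B)$ is purely $\R^{p,q}$-valued. Hence the equation $F = \dt(B)$ splits along $\io(p,q) = \o(p,q) \oplus \R^{p,q}$ into two independent components: the $\o(p,q)$-part forces $R = 0$, so that $\om$ is a flat connection, while the $\R^{p,q}$-part gives $B = d_\om e$, identifying the 2-form part of the 2-connection with the torsion. This is precisely the data listed in the statement.

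I expect the main difficulty to be bookkeeping rather than conceptual: one must check carefully that $B$ genuinely takes values in $\fake$ and not in some larger bundle, and that the inclusion $\dt$ lands entirely in the translation summand, so that fake flatness separates cleanly into $R = 0$ and $B = d_\om e$. Equivalently, and perhaps more transparently, one can route the argument through the inclusion $\Tel(p,q) \subseteq \E(\IO(p,q))$: an $\E(\IO(p,q))$ 2-connection is determined by a pair $(\om, e)$ with 2-form part the full curvature $R + d_\om e$, and restricting the structure 2-group to $\Tel(p,q)$---whose $\h$ is only the translation part $\R^{p,q}$---forces the $\o(p,q)$-valued summand $R$ to vanish.
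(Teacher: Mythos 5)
Your proposal is correct, and its core ingredients---identifying the bundle of objects as $\I\ff$, invoking Proposition \ref{global-2conn}, decomposing the connection and its curvature via Propositions \ref{iopq-conn} and \ref{iopq-curv}, and splitting fake flatness along $\io(p,q) = \o(p,q) \oplus \R^{p,q}$---are exactly those of the paper. The only organizational difference is that the paper reaches the theorem by first establishing Proposition \ref{e(g)-conn} for $\E(G)$ 2-connections, then identifying $\ff \times_{\O(p,q)} \E(\IO(p,q))$ 2-connections with pairs $(\om,e)$ having 2-form part $R + d_\om e$, and finally restricting along the inclusion $\Tel(p,q) \subseteq \E(\IO(p,q))$ so that fake flatness kills $R$; this is precisely the ``alternative route'' you sketch in your last paragraph, whereas your primary argument applies Proposition \ref{global-2conn} directly to the $\Tel(p,q)$ crossed module and splits $F = \dt(B)$ by hand. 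The two routes are mathematically the same computation packaged differently: the paper's detour through $\E(\IO(p,q))$ makes vivid the idea that $\Tel(p,q)$ ``truncates'' the curvature part of an $\E(\IO(p,q))$ 2-connection and thereby forces it to vanish, while your direct version is more self-contained. One point where your write-up is actually more careful than the paper: Proposition \ref{iopq-conn} as stated assumes the $\R^{p,q}$ part of $A$ is nondegenerate, but the theorem allows an arbitrary $\fake$-valued 1-form $e$; you correctly note that the correspondence must be used in the general form supplied by Lemma \ref{coframe-lemma} (vector bundle morphisms $TM \to \fake$, not just isomorphisms), a gap the paper's own prose (``connection $\om$ and coframe $e$'') silently glosses over.
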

\noindent In the case where $e$ is an isomorphism, it can be viewed as
a coframe field, and we have just the fields needed for describing
teleparallel geometry.

From this perspective, $\Tel(p,q)$ results from $\E(\IO(p,q))$ by
truncating the part where the curvature lives, leaving only torsion at
the morphism level.  Crucially, this truncation does not simply
`forget' the curvature part: the fake flatness condition forces the
curvature to vanish.

The reader may well wonder if we could do an analogous truncation that
would allow us to use 2-connections to describe geometries that are
not flat but rather torsion-free, removing the $\O(p,q)$ part of the
group of morphisms in the 2-group, rather than the $\R^{p,q}$ part.
In the language of crossed modules, this would mean taking the group
$H$ to be $\O(p,q)$.  But unlike $\R^{p,q}$, $\O(p,q)$ is not a normal
subgroup of $\IO(p,q)$, so the action of $\IO(p,q)$ does not restrict
to an action on this subgroup.  Thus, this strategy fails to give a
2-group.

Something interesting happens when we calculate the curvature of a
$\Tel(p,q)$ 2-con\-nec\-tion:

\begin{prop}
\label{2flat}
Every strict 2-connection on a strict $\Tel(p,q)$ 2-bundle has
vanishing 2-curvature.
\end{prop}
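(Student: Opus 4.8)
The plan is to compute the 2-curvature directly from its definition. Working in a local trivialization, as in Section~\ref{curvature}, we may treat the 2-connection as an $\io(p,q)$-valued 1-form $A$ together with an $\R^{p,q}$-valued 2-form $B$, and the 2-curvature is the $\R^{p,q}$-valued 3-form
\[
    G = dB + \dalpha(A) \wedge B.
\]
First I would invoke Theorem~\ref{telpq-conn} to record the structure of any such 2-connection: the connection splits as $A = \om + e$, with $\om$ a \emph{flat} $\o(p,q)$-valued connection and $e$ an $\R^{p,q}$-valued 1-form, while the 2-form part is $B = d_\om e$.

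The heart of the argument is the observation that the infinitesimal action $\dalpha \maps \io(p,q) \to \aut(\R^{p,q})$ of the teleparallel Lie 2-algebra, namely $\dalpha(\xi,v)w = \xi w$, is blind to the translational summand $v \in \R^{p,q}$: it factors through the projection $\io(p,q) \to \o(p,q)$. Hence $\dalpha(A) = \dalpha(\om + e)$ depends only on $\om$, and the second term collapses to $\om \wedge B$, the action of the Lorentz part on $B$. The 2-curvature therefore reduces to
\[
    G = dB + \om \wedge B = d_\om B,
\]
the ordinary covariant exterior derivative of $B$ with respect to $\om$.

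It then remains to substitute $B = d_\om e$ and apply the Bianchi identity $d_\om(d_\om e) = R \wedge e$ already used in the proof of Theorem~\ref{thm:main}, where $R$ is the curvature of $\om$. Since fake flatness forces $\om$ to be flat---as established in Theorem~\ref{telpq-conn}---we have $R = 0$, so $G = R \wedge e = 0$.

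The main obstacle is conceptual rather than computational: one must correctly track that $\dalpha$ ignores the $\R^{p,q}$ summand of $\io(p,q)$, so that the coframe contribution $e$ drops out of the 2-curvature term and the computation collapses exactly to the flat-connection case. Once this is seen, the vanishing of $G$ is forced by the flatness of $\om$, which is itself a consequence of fake flatness; crucially, no nondegeneracy assumption on $e$ and no topological hypotheses on $M$ are needed, which is precisely why the result holds for \emph{every} 2-connection.
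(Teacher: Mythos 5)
Your proof is correct and takes essentially the same route as the paper's: a local computation of $G$ using fake flatness (which forces $\om$ to be flat and $B = d_\om e$), followed by the Bianchi identity $d_\om(d_\om e) = R \we e$. The only difference is bookkeeping: the paper expands $d_A(d_\om e)$ and then kills the extra term $e \we d_\om e$ on the grounds that $e$ and $d_\om e$ take values in the abelian subalgebra $\R^{p,q} \subseteq \io(p,q)$, which is precisely your observation that $\dalpha$ is blind to the translational summand of $A$.
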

\begin{proof} 
Let $(A,B) = ((\om,e),d_\om e)$ be a 2-connection on a trivial principal
$\Tel(p,q)$ 2-bundle.  A direct calculation of the 2-curvature 3-form
$G$ gives
\begin{align*}
  G &= d_A (d_\om e) \\ 
    &= R\we d_\om e + e \we d_\om e.
\end{align*}
The first term here is zero, since $R=0$.  The second term is also
zero: the wedge product really involves the bracket of Lie algebra
valued forms, where $e$ and $d_\om e$ both live in the abelian
subalgebra $\R^{p,q}\subseteq \io(p,q)$.  The condition $G=0$ is a
local condition, so this proves any 2-connection has vanishing
2-curvature, on nontrivial 2-bundles as well.
\end{proof}

This phenomenon can also be seen by noting that the holonomy of a
teleparallel 2-con\-nec\-tion does not change as we deform the
surface.  Suppose we have a pair of paths $\gamma_1, \gamma_2$,
bounding a surface $\Sigma$:
\tikzstyle arrowstyle=[scale=1.3]
\tikzstyle middlearrow=[postaction={decorate,decoration={markings,
    mark=at position .52 with {\arrow[arrowstyle]{stealth'}}}}]
\[
\begin{tikzpicture}
\filldraw [fill=gray!20] (0,0) .. controls (.7,.6) and (1.3,.6) .. (2,0)  .. controls (1.3,-.6) and (.7,-.6) .. (0,0);
\draw [middlearrow] (0,0) .. controls (.7,.6) and (1.3,.6) .. (2,0);
\draw [middlearrow] (0,0) .. controls (.7,-.6) and (1.3,-.6) .. (2,0);
\node [scale=1.1,rotate=270] at (.9,0) {$\implies$};
\node [vsmall] at (1.2,0) {$\Sigma$};
\node [vsmall] at (1,.7) {$\gamma_1$};
\node [vsmall] at (1,-.7) {$\gamma_2$};
\end{tikzpicture}
\]
The general recipe for computing surface holonomies simplifies when we
have a $\Tel(p,q)$ 2-connection, because $\om$ is flat and the group
$\R^{p,q}$ is abelian.  Since $\om$ is flat, we can pick a local
trivialization of the fake tangent bundle for which $\om = 0$.  Using
this trivialization, the 2-form $d_\om e$ can be interpreted as an
$\R^{p,q}$-valued 2-form.  To obtain the surface holonomy, we simply
integrate this 2-form over $\Sigma$.  Stokes' theorem implies that
this surface holonomy can be rewritten as
\[
\int_\Sigma de = \int_{\gamma_2} e - \int_{\gamma_1} e.
\]  
Geometrically, $\int_{\gamma} e$ is just the `translational holonomy'
of the Poincar\'e connection along $\gamma$.  So, the surface holonomy
of a teleparallel 2-connection 
simply measures the difference between the translational holonomies
along the two bounding edges.  In particular, the surface holonomy
does not change as we apply a smooth homotopy to $\Sigma$ while
keeping its edges $\gamma_1$ and $\gamma_2$ fixed, so its 2-curvature
$G$ must vanish.

For a flat $\Po(p,q)$ 2-connection, the surface holonomy  can likewise be
seen as measuring the difference between translational holonomies
along its two bounding edges.  An immediate corollary of
Theorem~\ref{thm:main} is that, under the conditions of that theorem
(or, in any case, locally), a flat $\Po(p,q)$ 2-connection extends to
$\Tel(p,q)$ 2-connection, simply by adjoining a coframe field.  This
coframe is unique up to a covariantly closed 1-form, and adding a
covariantly closed 1-form clearly does not change the 2-holonomy.

\subsection{Teleparallel gravity as a Tel(\textit{p,q}) higher gauge theory}
\label{gravity}

We can now view teleparallel gravity as a higher gauge theory with
gauge 2-group $\Tel(p,q)$.  Let us describe how this works, beginning
with a brief review of the geometric framework we have built up.

We start with a manifold $M$ equipped with a fake tangent bundle $\fake$
and its corresponding fake frame bundle $\ff$.  From this, we build
the principal $\Tel(p,q)$ 2-bundle
\[
     \Tel(\ff) = \ff \times_{\O(p,q)} \Tel(p,q). 
\]
A 2-connection on $\Tel(\ff)$, by Theorem~\ref{telpq-conn}, is
equivalent to a flat connection $\fc$ on $\ff$ together with a
$\fake$-valued 1-form $e$.   We confine attention to $\Tel(p,q)$
2-connections for which $e \maps TM \to \fake$ is an isomorphism, and
hence a coframe field.  Pulling back along $e$, we thus get a metric
on $TM$ as well as a flat connection, the Weitzenb\"ock connection.
Its torsion is
\[
   T(v,w) = e^{-1}(d_\fc (v,w)). 
\]

We can then compute the action for a $\Tel(p,q)$ 2-connection using
the same formula given before in Equation (\ref{tele-action}):
\[
 S = \int d^nx\; \det(e) \left(
     \frac14 T^\rho{}_{\mu\nu} T_\rho{}^{\mu\nu} 
     + \frac12 T^\rho{}_{\mu\nu} T^{\nu\mu}{}_\rho 
     - T_{\rho\mu}{}^\rho T^{\nu\mu}{}_\nu
   \right)
\]
One should keep in mind that $\fc$ in this action is not an arbitrary
connection---it is part of a $\Tel(p,q)$ 2-connection, and hence
constrained to be flat.  In particular, in calculating the field
equations, only variations $\delta \fc$ preserving this flatness
constraint are allowed; since $\delta R = d_\fc \delta \fc$, these
$\delta \fc$ are precisely the covariantly closed ones.  An
alternative approach would be to allow $\fc$ to be an arbitrary
connection `off shell' but impose flatness when the equations of
motion hold, for example by using a Lagrange multiplier term.  
From the perspective of $\Tel(p,q)$ 2-connections, however, 
this seems less natural.

The above action is manifestly invariant under gauge transformations
of $\ff$: the determinant of $e$, the Weitzenb\"ock connection and its
torsion, as well as the metric, used to raise and lower indices, are
all invariant under gauge transformations acting on $\fc$ and $e$.

For purposes of comparing to other literature on teleparallel gravity,
however, it is worth noting that the action is often written in a way
that hides this gauge invariance.  From the perspective of this paper,
the apparent lack of $\O(p,q)$-invariance in many references on
teleparallel gravity (including, for example, the reference \cite{AP}
from which the formula for the action is taken), results from fixing
the flat connection on $\fake$ once and for all, by picking a fixed
trivialization $\fake = M\times \R^{p,q}$ and using the standard
flat connection on the trival $\R^{p,q}$ bundle.  If one transforms
the coframe field $e$ by a gauge transformation $g\maps M \to
\O(p,q)$, without also transforming the flat connection, the torsion,
and hence the action, will of course change.

\subsection{Cartan 2-geometry}

We now have a gravity action depending on a $\Tel(p,q)$ 2-connection.
However, is it invariant under (strict) gauge transformations of the
principal 2-bundle $\Tel(\ff)$?  Such a transformation is the same as
a gauge transformation of its bundle of objects, the principal
$\IO(p,q)$ bundle $\I\ff$.  Locally, these act on $e$ and $\omega$ to
give
\[
\begin{array}{ccll}
\om &\mapsto& h \om h^{-1}  +  h\, dh^{-1}  &=: \om' \\  
e &\mapsto& he \; + \; d_{\om'}  v  \\ 
\end{array}
\]
where $h$ and $v$ are functions with values in $\O(p,q)$ and
$\R^{p,q}$, respectively.  In the case where $h$ is the identity, this
amounts to shifting $e$ by an arbitrary covariantly exact 1-form. 
The action is {\em not} invariant under all such transformations, 
but only under
those for which $v = 0$.  These transformations are precisely the
gauge transformations of the fake 2-frame 2-bundle $\tff$.  To
see this, recall that the teleparallel 2-bundle can be built from
the fake 2-frame 2-bundle via
\[  
\Tel(\ff) \cong \tff \times_{\Po(p,q)} \Tel(p,q). 
\] 
Thus, any gauge transformation of $\tff$ gives one of $\Tel(\ff)$.
These are precisely the transformations with $v = 0$, since the gauge
2-group of $\tff$ has only Lorentz transformations as objects, not
translations.

In fact, something very similar often happens in ordinary
gauge-theoretic descriptions of gravity and related theories.  A
simple example is the {\bf Palatini action} for general relativity in
$n=p+q$ dimensions.  This action, which depends on a connection $\om$
on a fake tangent bundle, with curvature $R$, together with a coframe
field $e$, can be written:
\[
\label{palatini}
   S_{\scriptscriptstyle \rm Pal}[e,\om] = \int \star \big( \,\color{gray}\underbrace{\color{black}\rule[-.2em]{0em}{0em}e\we\cdots \we e}_{n-2}\color{black}\we R\,\big)
\]
The star operator $\star$ on the exterior bundle $\Lambda\fake$ turns
the $\Lambda^{n}\fake$-valued $n$-form in parentheses into an ordinary
real-valued $n$-form.  Using Proposition~\ref{iopq-conn}, we can view
the Palatini action as a function of a connection on the {\em
extended} fake frame bundle $\I\ff = \ff \times_{\O(p,q)} \IO(p,q)$.
However, the action is still invariant only under gauge
transformations of the subbundle $\ff$.

The reason for this apparently broken gauge symmetry in Palatini
gravity is that the $\IO(p,q)$ connection $(\om,e)$ is really a
`Cartan connection'.  Just as Riemannian geometry is the study of
spaces that look `infinitesimally' like Euclidean space, Cartan
geometry \cite{Sharpe} is the study of spaces that look
infinitesimally like homogeneous spaces.  It is thus an extension of
Klein's Erlangen program \cite{Klein} for understanding geometry using
homogeneous spaces.  Thus, it is important to note that the Erlangen
progam already uses a kind of `broken symmetry' to describe geometry.

In Klein's theory two groups play vital roles: a Lie group $G$ of
symmetries of a homogeneous space $X$, and a closed subgroup $G'
\subseteq G$, the stabilizer of an arbitrarily specified point, which
allows us to identify $X$ with $G/G'$.  So, a {\bf Klein geometry} is
technically a pair $(G,G')$ consisting of a Lie group and a closed
subgroup, but we think of these as a tool for studying the geometry of
the homogeneous space $G/G'$.

A Cartan geometry is then a space that is infinitesimally `modeled on'
$G/G'$.  We will not need a precise definition of Cartan geometry
here; what is important for our purposes is that it involves a principal 
$G'$ bundle $P$ together with a connection on
the associated principal $G$ bundle $P\times_{G'} G$.  Moreover, gauge
transformations of $P$ give isomorphic Cartan geometries, while more
general gauge transformations of $P\times_{G'} G$ can severely deform
the geometry.

In short: in a physical theory based on Cartan geometry, we expect to
see a $G$ connection for some Lie group $G$, but gauge invariance only
under some closed subgroup $G'$.  This phenomenon is ubiquitous in
attempts to describe gravity as a gauge theory by combining the
connection and coframe field into a larger connection, occurring not
only in Palatini gravity but also in MacDowell--Mansouri gravity
\cite{Wise2} and related theories \cite{Wise1}.  But the intriguing
fact we wish to emphasize is that \emph{the same thing seems to be
occurring in the higher gauge theory formulation of teleparallel
gravity---but with $G$ replaced by a 2-group}.

The present paper is not the place for extensive study of `Cartan
2-geometry'.  However, we would like to consider what a
straightforward reading of this analogy seems to imply for our
teleparallel gravity action.  As we have seen, in teleparallel gravity
the $\Po(p,q)$ 2-connection can be combined with the coframe
field $e$ to give a $\Tel(p,q)$ 2-connection $((\om,e),d_\om e)$, 
but the theory remains
invariant only under $\Po(p,q)$ gauge transformations.  This suggests
interpreting the theory in terms `Cartan 2-geometry' modeled on a
`homogeneous 2-space' given as the quotient $\Tel(p,q)/\Po(p,q)$.  But
what does this 2-space look like?

First, having done everything `strictly', it is easy to define a {\em
strict} quotient of strict 2-groups:

\begin{defn}
Let $\G$, be strict Lie 2-group with strict Lie sub-2-group $\G'$
(i.e.\ the groups of objects and morphisms of $\G'$ are Lie subgroups
of those of $\G$, and the maps are all restrictions of the
corresponding maps in the definition of $\G$).  The \define{strict
quotient} $\G/\G'$ is the Lie groupoid with
\begin{itemize}
\item $\G_0/\G'_0$ as objects 
\item $\G_1/\G'_1$ as morphisms
\item source, target, composition and identity-assigning maps induced
from those in $\G$.
\end{itemize}
\end{defn}
It is straightforward to check that the maps in $\G$ induce the corresponding 
maps on a well-defined way on the quotient, and that the result is indeed a 
Lie groupoid.  

There is a natural left action of $\G$ on the Lie groupoid $\G/\G'$,
induced by the left action of $\G$ on its underlying Lie groupoid (see
Example~\ref{G-2-space.2}).  By analogy with the 1-group case, we may
also refer to the strict quotient $\G/\G'$ as a `homogeneous 2-space'
for the group $\G$.  More generally, a quick way to define a
\define{homogeneous $\G$ 2-space} is to say it is any strict
$\G$ 2-space isomorphic to one of the form $\G/\G'$, though it is not
hard to give a more intrinsic definition.  Continuing with the
analogy to the ordinary case, we may also refer to the pair $(\G,\G')$
as a (\define{strict}) \define{Klein 2-geometry}.

\begin{prop}
$\Tel(p,q)/\Po(p,q)$ is isomorphic as a $\Tel(p,q)$ 2-space to the
space $\R^{p,q}$.
\end{prop}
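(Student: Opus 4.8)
The plan is to exhibit an explicit isomorphism of $\Tel(p,q)$ 2-spaces between the strict quotient $\Tel(p,q)/\Po(p,q)$ and $\R^{p,q}$, by computing the quotient groupoid level by level from the definition of the strict quotient and then comparing both the groupoid structure and the induced left $\Tel(p,q)$-action. By definition the quotient has object manifold $\IO(p,q)/\O(p,q)$ and morphism manifold $(\IO(p,q)\ltimes\R^{p,q})/(\O(p,q)\ltimes\R^{p,q})$, so the whole argument reduces to understanding these two coset spaces together with the structure maps relating them.

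First I would treat the objects. Writing an element of $\IO(p,q)=\O(p,q)\ltimes\R^{p,q}$ as $(g,v)$ and recalling that $\O(p,q)$ sits inside as $\{(g,0)\}$, the left coset of $(g,v)$ is $\{(gh,v) : h\in\O(p,q)\}$, which depends only on $v$. This is the classical identification of Minkowski space as the homogeneous space $\IO(p,q)/\O(p,q)\cong\R^{p,q}$, and under it the left $\Tel(p,q)$-action on objects (induced from left multiplication, as in Example~\ref{G-2-space.2}) becomes the usual affine Poincar\'e action $u\mapsto gu+v$. So $F_0\maps[(g,v)]\mapsto v$ is the candidate object-level isomorphism, and it is smooth, bijective, and equivariant essentially by inspection.

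Next I would treat the morphisms. An element of $\Tel(p,q)_1$ is $((g,v),w)$, and the morphisms of $\Po(p,q)$ sit inside as $\{((g,0),w)\}$; multiplying $((g,v),w)$ on the right by such an element and using the product law $((g,v),w)((h,0),u)=((gh,v),w+gu)$ shows that its left coset depends only on the first-factor translation $v$, giving $(\IO(p,q)\ltimes\R^{p,q})/(\O(p,q)\ltimes\R^{p,q})\cong\R^{p,q}$ as well. The key move is to pick the normal-form representative $((1,v),0)$ in each coset; this is the identity morphism of the object $(1,v)$, so under $F_0$ it should match the identity morphism of $v$ in $\R^{p,q}$ viewed as a 2-space. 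I would then check that, with these representatives, the descended source, target, identity-assigning, and composition maps of the quotient are precisely those of $\R^{p,q}$ as a 2-space, and that $F_1\maps[((g,v),w)]\mapsto v$ intertwines the left $\Tel(p,q)$-action with the corresponding action on $\R^{p,q}$.

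The main obstacle is the morphism level, specifically confirming that the structure maps and the action descend in a representative-independent way. The delicate point is the interplay of the two copies of $\R^{p,q}$ inside $\Tel(p,q)_1$---the first-factor translation $v$ and the label $w$---with the translations living in $\Po(p,q)$: quotienting by the morphisms of $\Po(p,q)$ absorbs both the Lorentz part $g$ and the entire label direction $w$, leaving only $v$, and one must verify that this is compatible with source, target, and composition so that the quotient is a genuine Lie groupoid isomorphic to $\R^{p,q}$. Once this coset bookkeeping is checked on representatives, equivariance of $F=(F_0,F_1)$ follows from the computation $((a,b),c)\cdot((1,v),0)=((a,b+av),c)$, whose coset is $b+av$, matching the Poincar\'e action already found on objects; assembling $F_0$ and $F_1$ then yields the claimed isomorphism of $\Tel(p,q)$ 2-spaces.
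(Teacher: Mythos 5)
Your proposal is correct and takes essentially the same approach as the paper's proof: compute the object-level coset space $\IO(p,q)/\O(p,q)\cong\R^{p,q}$ and the morphism-level coset space $(\IO(p,q)\ltimes\R^{p,q})/(\O(p,q)\ltimes\R^{p,q})\cong\R^{p,q}$, conclude that the strict quotient is $\R^{p,q}$ viewed as a Lie groupoid with only identity morphisms, and observe that the induced left action is the usual action of $\IO(p,q)$ on $\R^{p,q}$. The only difference is one of detail: you make explicit the coset computations, normal-form representatives, and the equivariance check that the paper's proof leaves to inspection.
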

\begin{proof}
Taking the strict quotient $\Tel(p,q)/\Po(p,q)$, we obtain a Lie
groupoid with:
\begin{itemize}
\item $\IO(p,q)/\O(p,q) \cong \R^{p,q}$ as objects, 
\item $(\IO(p,q)\ltimes \R^{p,q})/\IO(p,q) \cong \R^{p,q}$ as morphisms,
\item source and target maps $\R^{p,q} \to \R^{p,q}$ both the identity.
\end{itemize}
This is nothing but the space $\R^{p,q}$ thought of as a Lie groupoid
with only identity morphisms.  $\Tel(p,q)$ acts via its group of
objects $\IO(p,q)$, via the usual action of $\IO(p,q)$ on $\R^{p,q}$.
\end{proof}

\subsection{Weakening} 

In this paper, we have dealt entirely with {\em strict} constructions.
This approach eases the transition to higher gauge theory, since it is
built directly on constructions familiar from ordinary gauge theory.
For example, as we have seen:
\begin{itemize}
\item a strict 2-group has an ordinary group of objects;
\item a strict principal 2-bundle has an ordinary principal bundle of objects;
\item a strict 2-connection consists of an ordinary connection together with
a 2-form with values in an associated vector bundle; 
\item strict gauge transformations are induced by ordinary gauge
transformations.
\end{itemize}

However, even when one tries to keep everything `strict', as we have
done here, `weak' ideas can sneak in unexpectedly.  An
example is in the higher gauge theory interpretation of four-dimensional
\define{BF theory}.  As mentioned in the introduction, this is a gauge
theory for some Lie group $G$, with action given by
\[
S(A,B) = \int \tr \big( B \we F \big).
\]
Here, $A$ is a connection on a principal $G$ bundle $P$, $F$ it its
curvature, $B$ is a $(P\times_G \g)$-valued 2-form, and $G$ is assumed
semisimple, so that the Killing form `$\tr$' is nondegenerate.  This
theory can be viewed as a higher gauge theory for the \define{tangent
2-group} $\T G$, whose crossed module is $(G, \g, \Ad, 0)$: the fields
$(A,B)$ are the appropriate sort of ingredients for a 2-connection on
the strict 2-bundle $\P \times_G \T G$, and the field equations imply
the fake flatness condition.

However, while BF theory viewed as a higher gauge theory can be built
entirely on the {\em strict} constructions defined in this paper, it
has an extra symmetry that does not come from strict gauge
transformations of the principal $\T G$ 2-bundle.  If we shift $B$ by a
covariantly exact 2-form:
\[       
\begin{array}{ccl}
B &\mapsto& B \; + \; d_{A}a 
\end{array}
\]
the action changes only by a boundary term, thanks to the Bianchi
identity $d_A F= 0$.  Remarkably, this additional symmetry implies
$BF$ theory is invariant under {\em weak} gauge transformations.

Given this lesson from BF theory, it is interesting to ask how
teleparallel gravity behaves under weak gauge transformations when we
regard it as a $\Tel(p,q)$ higher gauge theory, even though we have
done everything `strictly' so far.  For this, we need to know a bit
about how weak gauge transformations act on 2-connections.

General (weak) gauge symmetries of principal 2-bundles have been
described by Bartels \cite{Bartels:2004}.  For our purposes, it
suffices to recall, in the case where our principal 2-bundle is
equipped with a 2-connection, how the 2-connection data transform
under gauge transformations \cite{BS,BreenMessing}.  If $\G$ is a
2-group with crossed module $(G,H,t,\xa)$, then gauge transformations
on a trivial principal $\G$ 2-bundle act on a 2-connection $(A,B)$ to
give a new 2-connection $(A',B')$ with:
\begin{equation}
\label{weak-gauge}
\begin{array}{ccl}
A' &=& g A g^{-1} \; + \; g\, dg^{-1} \; + \; \dt(a)  \\[.5em]  
B' &=& \alpha(g)(B) \; + \; d_{A'}a \; + \; 
a \wedge a
\end{array}
\end{equation}
where $g$ is a $G$-valued function and $a$ is an $\h$-valued 1-form.
In the second equation, the covariant differential is defined by
$d_{A'} a = da + \dalpha(A')\we a$.  Strict gauge transformations
correspond to the case where $a=0$; the fully general ones are
called \define{weak}.

\begin{thm}
\label{local-gauge-equivalence}
If $\pi_1(M) = 0$, then all 2-connections on the trivial $\Tel(p,q)$
2-bundle over $M$ are equivalent under weak $\Tel(p,q)$ gauge
transformations.
\end{thm}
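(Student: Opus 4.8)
The plan is to exhibit, for every 2-connection on the trivial $\Tel(p,q)$ 2-bundle, a weak gauge transformation carrying it to the single trivial 2-connection $(A,B)=(0,0)$. Since weak gauge transformations are invertible and compose, this suffices: any two 2-connections are then joined through $(0,0)$. By Theorem~\ref{telpq-conn}, fake flatness forces an arbitrary 2-connection into the form $A=\om+e$ with $\om$ a \emph{flat} $\o(p,q)$-connection, $e$ an $\R^{p,q}$-valued 1-form, and $B=d_\om e$ determined by these. So I only need to kill the pair $(\om,e)$, which I will do in two stages.

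First I would trivialize $\om$ using a \emph{strict} gauge transformation (the case $a=0$ of (\ref{weak-gauge})) built from a Lorentz-valued function, i.e.\ $g=(h,0)$ with $h\maps M\to\O(p,q)$. On the split $A=\om+e$ this acts by $\om\mapsto h\om h^{-1}+h\,dh^{-1}$ and $e\mapsto he$, exactly as recorded in the Cartan 2-geometry discussion. Because $\om$ is flat and $\pi_1(M)=0$, its holonomy is trivial and parallel transport from a fixed basepoint is path-independent; the resulting $h$ trivializes $\om$ to $0$. This is the same argument used in the proof of Theorem~\ref{thm:main} to trivialize a flat frame bundle, and it is the only place where simple connectivity enters. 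After this stage I may assume $\om=0$, with $e$ replaced by $he$ (still called $e$).

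With $\om=0$ in hand, the second stage uses the genuinely \emph{weak} parameter to erase $e$. Take $g=1$ and let $a=-e$ in (\ref{weak-gauge}). Since $\dt\maps\R^{p,q}\to\io(p,q)$ is the inclusion of the translation part, the term $\dt(a)$ shifts only the $\R^{p,q}$ component of $A$, giving $e'=e+a=0$ and leaving $\om'=0$; thus $A'=0$. For the 2-form, $B=d_\om e=de$, and with $A'=0$ and $g=1$ the transformation law gives
\[
   B'=B+d_{A'}a+a\we a = de+d(-e)+0=0,
\]
where $d_{A'}=d$ because $A'=0$, and $a\we a=\tfrac12[a,a]$ vanishes because $\h=\R^{p,q}$ is abelian. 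Hence $(A',B')=(0,0)$, as desired.

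The argument is essentially formal; the one substantive input is the first stage, so I expect trivializing the flat connection via $\pi_1(M)=0$ to be the main obstacle, though it is already supplied by the proof of Theorem~\ref{thm:main}. The conceptual point that makes the weak case so rigid is special to $\Tel(p,q)$: because $\dt$ is the \emph{inclusion} of $\R^{p,q}$, the free 1-form parameter $a$ can absorb an arbitrary coframe once the connection is flat, and abelianness of $\R^{p,q}$ removes the quadratic $a\we a$ term. The only checks I anticipate needing care are the strict transformation law for the $(\om,e)$ splitting and the consistent bookkeeping of the $B'$ equation, both of which are routine.
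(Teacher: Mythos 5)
Your proof is correct and follows essentially the same route as the paper's: decompose the 2-connection into a flat Lorentz connection and a coframe, use $\pi_1(M)=0$ to gauge away the flat connection, then use the weak parameter $a$ (with $a \we a = 0$ by abelianness of $\R^{p,q}$) to absorb the coframe, the 2-form part transforming automatically. The only cosmetic difference is that you normalize every 2-connection to $(0,0)$ and invoke composability and invertibility of weak gauge transformations, whereas the paper solves directly for a single transformation $(h, v=0, a)$ carrying one given 2-connection to another, which avoids any appeal to composition.
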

\begin{proof}
On a trivial $\Tel(p,q)$ 2-bundle, a 2-connection $(A,B)=((\fc,e),d_\om e)$
consists of an $\o(p,q)$-valued 1-form $\fc$, an $\R^{p,q}$-valued
1-form $e$, and the 2-form $B=d_\fc e$.  Specializing the formula
(\ref{weak-gauge}) for weak gauge transformations, we find that these
data transform to give a new 2-connection $(A',B')=((\om',e'),d_{\om'} e')$ 
with:
\begin{equation}
\label{weak-gauge.2}
\begin{array}{ccl}
\om' &=& h \om h^{-1}  +  h\, dh^{-1}   \\  
e' &=& he \; + \; d_{\om'}  v + a \\ 
B' &=& h d_{\om} e
\end{array}
\end{equation}
where $h\maps M\to \O(p,q)$, $v\maps M\to \R^{p,q}$ are smooth maps,
and $a\maps TM \to \R^{p,q}$ is an $\R^{p,q}$-valued 1-form on $M$.
In obtaining these equations we have used that $a \we a$ vanishes in
this case: it is defined using the Lie bracket, which vanishes on
$\R^{p,q}\subseteq \io(p,q)$.

Given a pair of 2-connections $(A,B)$ and $(A',B')$ on the trivial
$\Tel(p,q)$ 2-bundle over $M$, we wish to solve the above equations
for $h$, $v$, and $a$.  It is worth noting first that $v$ and $a$ do
not act in independent ways: we may clearly absorb $d_{\om'} v$ into
the definition of $a$, and hence without loss of generality assume
$v=0$.

Now $\om$ and $\om'$ are {\em flat} connections on the trivial
$\O(p,q)$ bundle over $M$.  But the moduli space of flat connections
on any fixed $\O(p,q)$ bundle is contained in
$\hom(\pi_1(M),\O(p,q))$, so $\pi_1(M) = 0$ implies $\om$ and $\om'$
must be related by a gauge transformation $h\maps M\to \O(p,q)$; that
is, $h\om h^{-1} + hdh^{-1} = \om'$.  This same $h$ changes $e$ to
$he$.  But, choosing $a=e'-he$, gives a gauge transformation mapping
$\om \mapsto \om'$ and $e\mapsto e'$.  An automatic consequence is
that $B\mapsto B'$.
\end{proof}

This theorem actually gives a `physics proof' that teleparallel
gravity, viewed as a $\Tel(p,q)$ higher gauge theory, cannot be
invariant under arbitrary weak gauge transformations.  If it were,
the theorem would imply the action was locally {\em independent} of the 
fields.  This cannot be true since teleparallel gravity has local degrees of 
freedom: indeed, it is locally 
equivalent to general relativity.  Of course, one can also check more
directly that the action is not invariant under the weak gauge
transformations given by Equation (\ref{weak-gauge.2}).

We found before that teleparallel gravity is invariant under strict
$\Po(p,q)$ gauge transformations.  The obvious question now is whether
it is also invariant under {\em weak} $\Po(p,q)$ gauge
transformations.  The answer to this question is already implicit in
the proof of Theorem~\ref{local-gauge-equivalence}, which really shows
a bit more than what the theorem states:

\begin{thm}
If $\pi_1(M) = 0$, then all 2-connections on the trivial $\Tel(p,q)$
2-bundle over $M$ are equivalent under weak $\Po(p,q)$ gauge
transformations.
\end{thm}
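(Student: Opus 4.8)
The plan is to show that the stronger statement needs essentially no new argument: the gauge transformations produced in the proof of Theorem~\ref{local-gauge-equivalence} already lie in the sub-2-group $\Po(p,q)$. So my only real task is to make the inclusion explicit and reread that proof.

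First I would pin down, at the level of transformation data, which weak $\Tel(p,q)$ gauge transformations arise from weak $\Po(p,q)$ ones via the inclusion $j\maps\Po(p,q)\to\Tel(p,q)$. The group of objects of $\Po(p,q)$ is $\O(p,q)$, whereas that of $\Tel(p,q)$ is $\IO(p,q)=\O(p,q)\ltimes\R^{p,q}$, so a weak $\Po(p,q)$ transformation is exactly a weak $\Tel(p,q)$ one whose $G$-valued function $g=(h,v)$ has $v=0$. The accompanying $\h$-valued 1-form $a$ is the same $\R^{p,q}$-valued 1-form in both cases, since $j_1$ carries the $\R^{p,q}$ of $\Po(p,q)$'s morphisms into the morphism-translation factor $\h=\R^{p,q}$ of $\Tel(p,q)$, not into the copy of $\R^{p,q}$ hiding in the first factor of $\IO(p,q)\ltimes\R^{p,q}$. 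Substituting $v=0$ into Equation~(\ref{weak-gauge.2}) then gives the action of a weak $\Po(p,q)$ gauge transformation on a $\Tel(p,q)$ 2-connection as $\om'=h\om h^{-1}+h\,dh^{-1}$ and $e'=he+a$, with $B'$ again forced to equal $d_{\om'}e'$.

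The second step is to reread the proof of Theorem~\ref{local-gauge-equivalence} with this dictionary in hand. There we noted that $v$ and $a$ act redundantly and set $v=0$ without loss of generality; we then used $\pi_1(M)=0$ to find $h\maps M\to\O(p,q)$ relating the flat connections $\om$ and $\om'$, and finally chose $a=e'-he$. Every transformation used is thus of the $v=0$ type just identified---a weak $\Po(p,q)$ transformation---so the same construction proves the present theorem verbatim.

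I do not anticipate a genuine obstacle; the one point demanding care is the bookkeeping in the first step. I must be sure that, although the map $\dt$ for $\Po(p,q)$ is trivial while that for $\Tel(p,q)$ is the inclusion, it is the latter that governs the action here, since we are acting on a $\Tel(p,q)$ 2-connection; this is precisely what lets the 1-form $a$ continue to shift the coframe $e$ after we pass to the smaller 2-group. Verifying this compatibility is exactly what guarantees that no reach is lost in restricting from $\Tel(p,q)$ to $\Po(p,q)$ gauge freedom.
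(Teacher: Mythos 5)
Your proof is correct and takes essentially the same route as the paper's: the paper likewise observes that the transformations used in proving Theorem~\ref{local-gauge-equivalence} can be taken to have $v=0$, and that the resulting transformations $\om' = h\om h^{-1} + h\,dh^{-1}$, $e' = he + a$ are exactly the gauge transformations coming from $\Po(p,q)$. Your additional bookkeeping---checking that the inclusion $j$ sends weak $\Po(p,q)$ data to $\Tel(p,q)$ data with $v=0$, and that the $\Tel(p,q)$ crossed module's $\dt$ (rather than the trivial $\dt$ of $\Po(p,q)$) governs the action so that $a$ still shifts $e$---simply makes explicit what the paper leaves implicit.
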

\begin{proof}
In the proof of Theorem~\ref{local-gauge-equivalence}, we noted that
in the transformations (\ref{weak-gauge.2}), we could take $v=0$
without loss of generality, and hence all $\Tel(p,q)$ 2-connections
are related by transformations of the form
\[
\begin{array}{ccl}
\om' &=& h \om h^{-1}  +  h\, dh^{-1}   \\  
e' &=& he \; + a \\ 
B' &=& h d_{\om} e
\end{array}
\]
These are just gauge transformations coming from the Poincar\'e
2-group $\Po(p,q)$.
\end{proof}

Summarizing our observations so far, consider this sequence of 2-groups:
\[    
    \O(p,q)  \longrightarrow \Po(p,q)  \longrightarrow \Tel(p,q)  
\]
The 2-connection of teleparallel gravity is a $\Tel(p,q)$
2-connection.  Considering strict gauge transformations, the action is
invariant only under $\Po(p,q)$ transformations, and this led us to
the suggestion that teleparallel gravity should be about `Cartan
2-geometry' based on the homogeneous 2-space $\Tel(p,q)/\Po(p,q)\cong
\R^{p,q}$.  On the other hand, considering weak gauge transformations,
the action is invariant only under $\O(p,q)$ transformations.  This
suggests `Cartan 2-geometry' based instead on the homogeneous 2-space
$\Tel(p,q)/\O(p,q)$.  We shall now see that this 2-space is the same
as the fundamental groupoid of $\R^{p,q}$.

\begin{defn}
Let $X$ be a manifold.  The \define{fundamental groupoid} of $X$, denoted
$\Pi_1(X)$, is the Lie groupoid whose objects are points of $X$ and whose
morphisms are homotopy classes of paths in $X$.
\end{defn}

For $\R^{p,q}$, any two points are connected by a unique homotopy
class of paths, so $\Pi_1(\R^{p,q})$ is particularly simple.  It is
clearly isomorphic to the Lie groupoid for which:
\begin{itemize}
\item $\R^{p,q}$ is the group of objects, 
\item $\R^{p,q}\times \R^{p,q}$ is the group of morphisms,
\item the source of the morphism $(v,w)$ is $v$,
\item the target of the morphism $(v,w)$ is $(v+w)$
\item the composite $(v,w) \circ (v',w')$, when defined, 
is $(v',w+w')$
\end{itemize}
Identifying $\Pi_1(\R^{p,q})$ with this Lie groupoid, we can turn
$\Pi_1(\R^{p,q})$ into a $\Tel(p,q)$ 2-space, defining an action
\[
    \begin{array}{ccc}
      \Tel(p,q) \times \Pi_1(\R^{p,q}) &\to& \Pi_1(\R^{p,q}) \\
    \end{array}
\]
given on objects by
\[
    \begin{array}{ccc}
      \IO(p,q) \times \R^{p,q} &\to& \R^{p,q} \\
       ((h,v),w) &\mapsto & hw+v
    \end{array}
\]
and given on morphisms as follows:
\[
    \begin{array}{ccc}
      (\IO(p,q)\ltimes\R^{p,q}) \times (\R^{p,q} \times \R^{p,q}) &\to& \R^{p,q}\times \R^{p,q} \\
       \big(((h,v),w), (v',w')\big) &\mapsto & (hv' + v, hw' + w) .
    \end{array}
\]
We then have:
\begin{thm}
As $\Tel(p,q)$ 2-spaces, $\Tel(p,q)/\O(p,q) \cong \Pi_1(\R^{p,q})$. 
\end{thm}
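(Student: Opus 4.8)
The plan is to compute the strict quotient $\Tel(p,q)/\O(p,q)$ directly from the definition of strict quotient, and then observe that the resulting Lie groupoid, together with its induced $\Tel(p,q)$-action, coincides on the nose with the explicit presentation of $\Pi_1(\R^{p,q})$ given just above. The first step is to pin down how $\O(p,q)$ sits inside $\Tel(p,q)$ as a sub-2-group: at the level of objects it is the subgroup $\{(h,0) \maps h \in \O(p,q)\} \subseteq \IO(p,q)$, and since $\O(p,q)$ has only identity morphisms, at the morphism level it is $\{((h,0),0) \maps h \in \O(p,q)\} \subseteq \IO(p,q) \ltimes \R^{p,q}$, because the identity morphism on an object $(g,v)$ of $\Tel(p,q)$ is $((g,v),0)$.

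Next I would compute the two quotients using the semidirect-product multiplication. The left coset of $(g,v) \in \IO(p,q)$ is $\{(gh,v) \maps h \in \O(p,q)\}$, so $\IO(p,q)/\O(p,q) \iso \R^{p,q}$ via $[(g,v)] \mapsto v$; this is exactly the object-level computation already used for $\Tel(p,q)/\Po(p,q)$. Likewise the left coset of $((g,v),w)$ is $\{((gh,v),w) \maps h \in \O(p,q)\}$, giving $(\IO(p,q) \ltimes \R^{p,q})/\O(p,q) \iso \R^{p,q} \times \R^{p,q}$ via $[((g,v),w)] \mapsto (v,w)$. Feeding these identifications into the induced source, target and composition maps of the strict quotient reproduces precisely the structure maps listed for $\Pi_1(\R^{p,q})$: the source of $(v,w)$ is $v$, its target is $v+w$, and $(v,w) \circ (v',w') = (v',w+w')$ whenever $v = v'+w'$. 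Thus the underlying Lie groupoids agree.

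It then remains to check that the isomorphism is one of $\Tel(p,q)$ 2-spaces, i.e.\ that it intertwines the two actions. Here I would compute the left action induced by multiplication in $\Tel(p,q)$: an object $(h,u) \in \IO(p,q)$ sends $[(g,v)]$ to $[(h,u)(g,v)] = [(hg,\,u+hv)]$, which under $[(g,v)] \mapsto v$ reads $v \mapsto hv + u$, matching (after relabeling the dummy variables) the displayed action on objects. The analogous computation at the morphism level, using $((h,u),s)((g,v),w) = ((hg,\,u+hv),\,s+hw)$, yields $(v,w) \mapsto (hv+u,\,hw+s)$, matching the displayed action on morphisms. Since left multiplication descends to the quotient, equivariance is automatic once these two formulas are verified, so the isomorphism of Lie groupoids is automatically an isomorphism of $\Tel(p,q)$ 2-spaces.

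The routine but genuinely load-bearing part is the bookkeeping in these coset and action computations: one must consistently use \emph{left} cosets, so that the residual $\Tel(p,q)$-action descends as left multiplication, and must track the semidirect-product twists $u+hv$ and $s+hw$ correctly. No real obstacle arises beyond this careful unwinding of definitions; the content of the theorem is simply that the strict-quotient presentation and the explicit fundamental-groupoid presentation are literally the same Lie groupoid equipped with the same action.
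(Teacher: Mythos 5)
Your proof is correct and follows essentially the same route as the paper's: form the strict quotient at the object and morphism levels, obtaining $\IO(p,q)/\O(p,q) \cong \R^{p,q}$ and $(\IO(p,q)\ltimes\R^{p,q})/\O(p,q) \cong \R^{p,q}\times\R^{p,q}$, identify the induced structure maps with the explicit presentation of $\Pi_1(\R^{p,q})$, and note that the $\Tel(p,q)$-action descends from left multiplication. You merely carry out in full detail the coset, structure-map, and equivariance computations that the paper compresses into ``it is clear.''
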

\begin{proof}
To form the strict quotient, 
we simply take the quotient of groups at both the object and 
the morphism level, obtaining a Lie groupoid with:
\begin{itemize}
\item $\IO(p,q)/\O(p,q) \cong \R^{p,q}$ as objects, and
\item $(\IO(p,q)\ltimes \R^{p,q})/\O(p,q) \cong \R^{p,q}\times \R^{p,q}$ as morphisms.
\end{itemize}
Comparing to the description of $\Tel(p,q)$ in Definition
\ref{defn:Tele}, it is clear that the above description of
$\Pi_1(\R^{p,q})$ results from ignoring the $\O(p,q)$ parts at
both object and morphism levels, and that the action of $\Tel(p,q)$ on
$\Pi_1(\R^{p,q})$ just comes from the left action of $\Tel(p,q)$ on
itself.
\end{proof}

\section{Outlook}
\label{conclusions}

We have seen that the geometry of teleparallel gravity is closely
related to the Poincar\'e 2-group, and to the teleparallel 2-group.
Previously, the Poincar\'e 2-group $\Po(p,q)$ seemed mathematically
natural but without much physical justification: why should we treat
Lorentz transformations as \textit{objects} but translations as
\textit{morphisms}?  Our answer is that if we do this, the
Weitzenb\"ock connection and its torsion fit together into a flat
$\Po(p,q)$ 2-connection.  Moreover, by extending $\Po(p,q)$ to the
teleparallel 2-group $\Tel(p,q)$, we can also include the coframe
field as part of the 2-connection, allowing us to write an action for
teleparallel gravity as a function of just a $\Tel(p,q)$ 2-connection.

On the other hand, we have seen that this action is invariant only
under gauge transformations living in a sub-2-group: $\Po(p,q)$ if we
consider only strict gauge transformations, or $\O(p,q)$ if we
consider weak ones.  We have discussed how this parallels the
geometric situation in many other approaches to gravity
\cite{Wise1,Wise2} where Cartan geometry \cite{Sharpe} provides the
geometric foundation, and suggested that `Cartan 2-geometry' may play
an analogous role in teleparallel gravity and other higher gauge
theories.

Indeed, we expect `Cartan 2-geometry' should be an interesting 
subject in its own right, with much broader 
applications than those suggested here.   In Cartan geometry, to each
homogeneous space $G/G'$, there corresponds a type of geometry 
that can be put on a more general manifold. 
Similarly, in Cartan 2-geometry each homogeneous 2-space $\G/\G'$
should give a type of geometry---or rather, a type of `2-geometry'---that 
can be put on a more general Lie groupoid.  

Our work suggests several things to be done toward 
developing a general theory of Cartan 2-geometry.  We would like to 
touch on two key points for this effort. 

First, while we have focussed on 2-bundles for which the `base 
2-space' is actually just a manifold, Cartan 2-geometry should in 
general involve 2-connections on bundles over intersting Lie groupoids, 
so many of the ideas we have discussed here deserve to be generalized
to that case.  

Second, despite our pragmatic use of {\em strict} constructions
throughout most of this paper, we ultimately expect the weak analogs
of concepts we have described to play a more fundamental role; weak
constructions in category theory are the most natural, and often the
most interesting.  In the teleparallel gravity case, we have seen that
we naturally get an interesting Lie groupoid as our `model 2-space'
only if we consider weak gauge transformations.  But `weakening' one
aspect of a theory tends to suggest, if not demand, weakening other
aspects.  For example, once we allow weak gauge transformations, we
can use them to assemble `weak 2-bundles' where the transition
functions on overlaps satisfy the usual equations only up to an
isomorphism.  In fact, even if we try building strict 2-bundles
initially, weak gauge transformations do not preserve strictness, so
we're essentially forced to use weak bundles.  Once we have weak
bundles, strict 2-connections no longer make sense.  And so on.

The theory of 2-bundles \cite{Bartels:2004} has been developed in
considerable generality, including weak principal bundles for {\em
weak} 2-groups, as well as the possibility of very general `base
2-spaces'---that is, base spaces that are Lie groupoids rather than
mere manifolds.  On the other hand, the theory of 2-connections, while
understood in a weak context \cite{BH, BS} is so far best understood
in the case where the base space is just a manifold.  Thus, the theory of
2-connections on 2-bundles over 2-spaces deserves further study.
Understanding `Cartan 2-connections' may be aided by studying the
concrete examples presented in this paper.

In the meantime, the heuristic picture is perhaps clear enough to 
venture a guess on some details of `Cartan 2-geometry', in the case
arising from teleparallel gravity, where the model `Klein 2-geometry' 
is
\[
   \Tel(p,q)/\O(p,q) \cong \Pi_1(\R^{p,q}).
\]
For this, we would like a 2-connection on a principal $\Tel(p,q)$
2-bundle that reduces to a principal $\O(p,q)$ 2-bundle.  It also
seems reasonable to take $\Pi_1(M)$ as the base 2-space, where $M$ is
a $(p+q)$-dimensional manifold.  In fact, there is an easy way to get
an $\O(p,q)$ 2-bundle over this Lie groupoid, and extend it to a
$\Tel(p,q)$ 2-bundle.  Start with a fake frame bundle $\ff\to M$,
equipped with a flat connection $\om$.  Form the 2-space $\Pi_1^{\rm
hor}(\ff, \om)$ whose objects are points in $\ff$ and whose morphisms
are homotopy classes of {\em horizontal} paths in $\ff$.  There is an
obvious projection
\[
\Pi_1^{\rm hor}(\ff, \om)\to \Pi_1(M),
\]
and it is easy to see that the object and morphism maps are both
principal $\O(p,q)$ bundles; in fact, this is a principal $\O(p,q)$
2-bundle.  Generalizing our associated 2-bundle construction to allow
for a base 2-space, the extension $\O(p,q) \to \Tel(p,q)$ gives a
$\Tel(p,q)$ 2-bundle.  Carrying on with the analogy to ordinary Cartan
geometry, we expect our Cartan 2-geometry to involve a 2-connection on
this 2-bundle, subject to certain `nondegeneracy' conditions.  However,
we leave the details for further work.

\subsection*{Acknowledgements}

We thank Aristide Baratin, John Huerta, and Jeffrey Morton for
helpful conversations regarding the Poincar\'e 2-group.  We are also 
grateful for the hospitality of several fine caf\'es in Erlangen 
where this work began.

\end{document}